%% file: main.tex
\RequirePackage{xcolor}
\documentclass[journal,twoside,web]{ieeecolor}
\usepackage{generic}

\input{packages}
\newcommand\orcidicon[1]{\href{https://orcid.org/#1}{\includegraphics[scale=0.04]{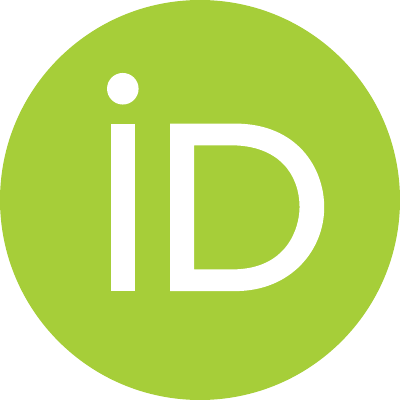}}}

\def\BibTeX{{\rm B\kern-.05em{\sc i\kern-.025em b}\kern-.08em
    T\kern-.1667em\lower.7ex\hbox{E}\kern-.125emX}}
\begin{document}
\bstctlcite{IEEEexample:BSTcontrol}

\title{Network-Induced Strategic Communication in Opinion Dynamics}


\author{Hassan~Munif$^{1,}$\textsuperscript{\orcidicon{0009-0009-8838-243X}}, Anthony~Couthures$^{2,}$\textsuperscript{\orcidicon{0009-0008-6102-9356}}, Vineeth~S.~Varma$^{1,}$\textsuperscript{\orcidicon{0000-0001-8762-2790}}, Samson~Lasaulce$^{1,}$\textsuperscript{\orcidicon{0000-0001-9837-9538}}, and Tamer~Ba\c sar$^{3,}$\textsuperscript{\orcidicon{0000-0003-4406-7875}}
    \thanks{This work was supported by the CNRS MITI project BLESS. (Corresponding author: Hassan~Munif.)}
    \thanks{$^1$~Universit\'e de Lorraine, CNRS, CRAN, F-54000 Nancy, France.}
    \thanks{$^2$~UCLouvain, ICTEAM Institute, Louvain-la-Neuve, 1348, Belgium.}
    \thanks{$^3$~Coordinated Science Laboratory, University of Illinois Urbana-Champaign, Urbana, IL, USA.}
}

\maketitle
\begin{abstract}
    Classical opinion dynamics typically assume a fixed mapping from private opinions to public signals, such as linear exchange, saturated signaling, or discrete public actions. In this paper, we show that these communication mappings can be derived from a strategic communication game played on a weighted influence network. Each agent acts as a receiver estimating its neighbors' states and as a sender broadcasting a public signal to influence its audience. We prove that the network's effect on a sender is summarized by a scalar, network-induced exaggeration factor, and the network game decouples into independent scalar cheap talk problems. The communication rules then emerge as behavioral regimes of one model: aligned incentives recover linear averaging; persuasive senders facing naive receivers produce saturated signaling; and persuasive senders facing Bayesian receivers cannot credibly reveal their opinions, so equilibrium communication becomes an interval quantizer, providing a game-theoretic foundation for continuous-opinion discrete-action (CODA) dynamics. Embedded in repeated opinion updates, the resulting strategic-CODA dynamics preserve opinion clustering and exclude extremism. The model predicts that a speaker exaggerates more when their audience is less influenced by them, and that under strong exaggeration, credible public expression collapses to a binary stance even while private opinions remain continuous.
\end{abstract}

\begin{IEEEkeywords}
    Multi-agent systems, network games, opinion dynamics, strategic communication.
\end{IEEEkeywords}
\section{Introduction}
\label{sec:intro}

\IEEEPARstart{O}{pinion} dynamics study how beliefs evolve when agents repeatedly observe one another over a network. The classical models of DeGroot and Friedkin-Johnsen describe this process through linear social averaging: each agent combines its own opinion with the opinions of its neighbors, possibly with inertia or prejudice \cite{degroot1974,friedkin1990}. This assumption of transparent exchange yields a tractable baseline and, under standard connectivity conditions, predicts consensus or weighted agreement.
However, in social settings, the internal opinion itself is rarely exchanged directly: a voter casts a ballot, a user posts a rating, or a group member endorses an option without fully revealing their private belief. This mismatch reflects \emph{preference falsification} \cite{kuran1995private}, where individuals misrepresent their private views under social pressure. The public act is the only observable quantity; it is typically coarser than the underlying belief and is often chosen strategically with an audience in mind. Once observed, this public signal becomes the social information that drives decisions.

The discrepancy between private and expressed opinions is a core subject in networked opinion dynamics. Recent work shows that quantized messaging and limited attention can generate echo chambers, hysteresis, and lock-in even without hostile interactions \cite{kong2026messaging}. Persistent disagreement can also be produced by stubbornness and structural heterogeneity \cite{acemoglu2013opiniona,ghaderi2014opinion}, bounded confidence \cite{deffuant2000mixing,hegselmann2002opinion,etesami2015game},
antagonistic relations on signed graphs~\cite{altafini2013consensus,liu2017exponential}, and by saturated or sigmoidal interaction laws \cite{bizyaeva2023nonlinear,couthures2025from}.
Continuous-opinion discrete-action (CODA) models explicitly distinguish private opinions from public actions, assuming the former remain continuous while the latter lie in a finite set \cite{martins2008continuous,martins2010importance}; coevolutionary variants couple both layers \cite{aghbolagh2023coevolutionary}. Quantized consensus and networked control models study related effects under finite-rate exchange \cite{chowdhury2016continuous,ceragioli2018consensus,carli2010gossip,etesami2015convergence}. In these works, however, the map from private opinion to public signal is typically fixed a priori.

In this work, we endogenize the opinion-to-signal mapping by deriving it directly from individual incentives. A related effort derives the weighted-median update rule from cognitive-dissonance minimization in a network game \cite{mei2022micro}, but under complete information, with neighbors' states directly observed.
In contrast, our model introduces an asymmetric-information signaling stage: as a receiver, an agent seeks to infer the private opinions of its sources; as a sender, the same agent aims to shift its audience toward its own position. This tension is the classic credibility problem of strategic information transmission (SIT): in the cheap talk model of Crawford and Sobel \cite{crawford1982strategic}, misaligned incentives rule out fully revealing communication and lead to coarse partitions of the state space. The theory has been extended to multiple senders and audiences \cite{krishna2001model,battaglini2002multiple,farrell1989cheap,goltsman2011how} and taken up in the control community for hierarchical communication games, strategic sensing and communication~\cite{akyol2017information,farokhi2017estimation,saritas2017quadratic,kazikli2022signaling}. For scalar cheap talk in which the sender's ideal action is \emph{linear} in the state (the structure our network game induces), \cite{melumad1991communication,gordon2010on} characterized equilibrium coarseness. Closer to our setting, \cite{hagenbach2010strategic} and \cite{galeotti2013strategic} study SIT on networks but in static one-shot formulations that do not embed the resulting communication protocol inside an opinion dynamics feedback loop.
Within opinion dynamics, models have allowed expressed opinions to deviate from private ones: through conformity pressure in \cite{ye2019influence}, and through preference-driven misrepresentation in \cite{buechel2015opinion}. Both, however, are complete-information models: receivers face no inference problem, so communication carries no credibility constraint.
To our knowledge, no prior study derives the opinion-to-signal map of opinion dynamics from strategic incentives, even though coarse, strategically chosen public expression (such as ballots, ratings, and endorsements) is the prevalent mode of social communication. A fundamental question thus remains: \emph{Can linear exchange, saturated expression, and quantized public actions be derived from a single strategic communication model, and how does the network graph determine the resulting communication law?}

In our model, each agent acts as both a receiver and a sender over a fixed influence network. As a receiver, it balances personal inertia against social conformity; as a sender, it broadcasts one public message to influence its audience. We prove that the network topology determines how much the sender must distort its public expression, and show that canonical communication laws emerge as behavioral regimes of this single game. The main contributions of this paper are:

1) We formulate a public-broadcast game on a network, where each agent acts as both a receiver and a sender, and show that audience weights and inertia induce a network-dependent exaggeration factor: the network itself determines how much each individual overstates in public.

2) We prove that, under Bayesian decoding and symmetric priors, the network game decouples into independent scalar cheap talk problems: each agent's communication problem depends on the network only through a scalar.

3) We classify the induced signaling regimes (linear, saturated, and quantized). For the Bayesian regime, we prove that full revelation is impossible, show that interval quantization is optimal, derive closed-form thresholds under a uniform prior for any number of bins, and show that strong exaggeration collapses public discourse to a binary stance.

4) We embed the induced quantizers in the repeated dynamics, obtaining what we denote as the \emph{strategic-CODA dynamics}, and prove forward invariance, monotonicity, existence of interior steady states, and a convergence-switching dichotomy: private opinions converge if and only if the aggregate public signal settles. Strategic quantization thus preserves clustering.

\textbf{Notation.} Bold lower-case and upper-case letters denote vectors and matrices, respectively (e.g., $\bm{v} \in \mathbb{R}^n$, $\bm{W} \in \mathbb{R}^{n \times m}$), with $v_i$ and $w_{ij}$ denoting their respective entries. Column of all ones is $\bm{1}$. Vector inequalities $\bm{u} \leq \bm{v}$ apply component-wise. We let $\operatorname{diag}(\bm{v})$ denote the diagonal matrix with diagonal elements $\bm{v}$. 
Expectations $\mathbb{E}[\cdot]$ are taken with respect to the common prior, using the same notation for random variables and their realizations when no confusion can arise.

\section{Opinion Formation and Induced Dynamics}
\label{sec:problem_formulation}

Consider a population of $N$ agents over a weighted, directed graph $\mathcal{G} = (\mathcal{V}, \mathcal{E}, \bm{W})$ indexed by the node set $\mathcal{V} = \{1, \dots, N\}$. The edge set $\mathcal{E} \subseteq \mathcal{V} \times \mathcal{V}$ defines the communication links, where a directed edge $(j, i) \in \mathcal{E}$ represents an information flow from the source agent $j$ to the receiver agent $i$. Each edge $(j,i)$ is assigned a non-negative weight $w_{ij} \geq 0$ that quantifies the influence of agent $j$ over $i$. These weights form the adjacency matrix $\bm{W} = [w_{ij}]$, which we assume to be row-stochastic ($\bm{W}\bm{1} = \bm{1}$) with no self-loops ($w_{ii} = 0$).

For each agent $i \in \mathcal{V}$, we define two topological neighborhoods: let $\mathcal{N}_i^{\mathrm{S}} = \{j \in \mathcal{V}: w_{ij}>0\}$ denote the set of \emph{sources} (in-neighborhood) representing the agents whose signals are observed by $i$, and $\mathcal{N}_i^{\mathrm{A}} = \{k \in \mathcal{V}: w_{ki}>0\}$ denote the \emph{audience} (out-neighborhood), the agents observing $i$'s signals.

Each agent $i$ holds a private opinion $x_i$ drawn independently from a common prior distribution supported on $[-1,1]$. We denote the probability density function of this prior by $\phi(x)$, which is assumed to be common knowledge. We denote the state space by $\mathcal X\coloneqq[-1,1]^N$.

\begin{standassumption}
    \label{asm:graph}
    The graph $\mathcal{G}$ is \textbf{time-invariant}, and every agent has at least one audience member, i.e., $\mathcal{N}_i^{\mathrm{A}} \neq \emptyset$ for all $i \in \mathcal{V}$. Private opinions $x_i\in[-1,1]$ are mutually independent and drawn from a common prior $\phi(x)$, which is \textbf{symmetric} around zero.
\end{standassumption}

We model the evolution of opinions over this network as a sequence of repeated interactions. Each interaction is governed by a decision protocol consisting of encoding opinions, decoding messages, and updating beliefs.

\begin{figure}[t]
    \centering
    \resizebox{\columnwidth}{!}{
        \begin{tikzpicture}[
                >={Stealth[length=1.8mm,width=1.4mm]},
                font=\footnotesize,
                state/.style={circle, draw, thick, minimum size=6.5mm, inner sep=0pt, fill=white},
                block/.style={rectangle, draw, thick, rounded corners=1pt,
                        minimum height=5mm, minimum width=5.5mm, inner sep=1pt, fill=black!4},
                sourcebox/.style={draw=green!50!black, fill=green!4, dashed, thick,
                        rounded corners=4pt, inner sep=5pt},
                agentbox/.style={draw=red!60!black, fill=red!3, thick,
                        rounded corners=4pt, inner sep=5pt},
                audiencebox/.style={draw=blue!60!black, fill=blue!3, dashed, thick,
                        rounded corners=4pt, inner sep=5pt},
                sarr/.style={->, semithick},
                carr/.style={->, semithick, blue!60!black},
                smalllbl/.style={font=\footnotesize, inner sep=1pt},
                msglbl/.style={font=\footnotesize\itshape, inner sep=1pt,
                        blue!60!black, fill=white},
                boxlbl/.style={font=\footnotesize\bfseries, inner sep=1pt}
            ]

            \node[state] (xj)    at (-1.3, -0.3) {$x_j$};
            \node[block] (sigj)  at (-1.3,  1.0) {$\sigma_j$};

            \node[block] (muj)   at ( 0.6,  1.0) {$\mu_j$};
            \node[block] (zeta)  at ( 1.7,  1.0) {$\zeta_i$};
            \node[state] (zi)    at ( 3.0,  1.0) {$z_i$};

            \node[state, fill=red!8] (xi)   at ( 1.7, -0.3) {$x_i$};
            \node[block]             (sigi) at ( 2.9, -0.3) {$\sigma_i$};

            \node[block] (mui)   at ( 4.8, -0.3) {$\mu_i$};
            \node[block] (zetak) at ( 4.8,  1.0) {$\zeta_k$};

            \begin{scope}[on background layer]
                \node[sourcebox,   opacity=0.35,
                    fit=(xj)(sigj),   xshift=4pt, yshift=-4pt] {};
                \node[audiencebox, opacity=0.35,
                    fit=(mui)(zetak), xshift=4pt, yshift=-4pt] {};
                \node[sourcebox,   fit=(xj)(sigj)]                  (boxj) {};
                \node[agentbox,    fit=(muj)(zeta)(zi)(xi)(sigi)]   (boxi) {};
                \node[audiencebox, fit=(mui)(zetak)]                (boxk) {};
            \end{scope}

            \node[boxlbl, anchor=south, text=green!50!black]
            at (boxj.north) {Source $j \in \mathcal{N}_i^{\mathrm{S}}$};
            \node[boxlbl, anchor=south, text=red!60!black]
            at (boxi.north) {Agent $i$};
            \node[boxlbl, anchor=south, text=blue!60!black]
            at (boxk.north) {Audience $k \in \mathcal{N}_i^{\mathrm{A}}$};

            \draw[sarr] (xj) -- (sigj);
            \draw[carr] (sigj) -- node[msglbl, above]   {$m_j$}
            node[smalllbl, below] {$w_{ij}$} (muj);
            \draw[sarr] (muj) -- node[smalllbl, above] {$\widehat{x}_j$} (zeta);
            \draw[sarr] (zeta) -- (zi);
            \draw[sarr] (xi) -- node[smalllbl, right, pos=0.45] {$\alpha_i$} (zeta);
            \draw[sarr] (xi) -- (sigi);
            \draw[carr] (sigi) -- node[msglbl, above]   {$m_i$}
            node[smalllbl, below] {$w_{ki}$} (mui);
            \draw[sarr] (mui) -- node[right] {$\widehat{x}_i$} (zetak);

            \draw[decorate, decoration={brace, mirror, amplitude=4pt, raise=8pt},
                semithick, gray!55!black]
            ([xshift=-2pt]sigi.south) -- ([xshift=2pt]mui.south);
            \node[font=\footnotesize\itshape, gray!55!black, anchor=north]
            at ($(sigi.south)!0.5!(mui.south) + (0,-0.45)$)
            {$s_i \;:=\; \mu_i \circ \sigma_i$};
        \end{tikzpicture}
    }

    \caption{Communication protocol from the perspective of agent $i$ (red), with representative source $j \in \mathcal{N}_i^{\mathrm{S}}$ (green, dashed) and audience member $k \in \mathcal{N}_i^{\mathrm{A}}$ (blue, dashed). As a receiver (top row), $i$ decodes each message $m_j$ via $\mu_j$ and combines the estimate $\widehat{x}_j$ with its own state $x_i$ through the update map $\zeta_i$. As a sender (bottom row), $i$ encodes $x_i$ via $\sigma_i$ and broadcasts $m_i$ to its audience, decoded by the common $\mu_i$; the effective signaling map is $s_i := \mu_i \circ \sigma_i$.}
    \label{fig:protocol}
\end{figure}

\subsection{Protocol and Agent Objectives}
\label{subsec:protocol_objectives}

To represent strategic public interaction, we consider a public-broadcast protocol consisting of three phases with simultaneous moves within each phase (see Fig.~\ref{fig:protocol}):

\textbf{1) Encoding opinions:} Agent $i$ privately observes its state $x_i$ and applies an \emph{encoding policy} $\sigma_i : [-1, 1] \to \mathcal{M}$ to broadcast a public message $m_i = \sigma_i(x_i)$, where $\mathcal{M}$ is the message space. We assume that the message space is continuous, namely $\mathcal{M} \coloneqq [-1,1]$. Because $\sigma_i$ maps to a single message in $\mathcal{M}$, the agent is restricted from sending targeted messages to individual audience members.

\textbf{2) Decoding messages:} As a receiver, agent $i$ observes the messages from its sources $j \in \mathcal{N}_i^{\mathrm{S}}$. Because private opinions are drawn independently from a common prior, every receiver in $j$'s audience forms an identical posterior estimate of $j$'s state. We therefore define a common decoding policy $\mu_j: \mathcal{M} \to [-1,1]$, such that any receiver listening to sender $j$ computes the estimate $\widehat{x}_j \coloneqq \mu_j(m_j)$.

\textbf{3) Opinion update:} Finally, the agent updates its opinion via an \emph{update policy} $\zeta_i : [-1, 1] \times \mathcal{M}^{|\mathcal{N}_i^{\mathrm{S}}|} \to [-1,1]$ that maps its private opinion and the received messages to an updated opinion $z_i \coloneqq \zeta_i(x_i, \bm{m}_{\mathcal{N}_i^{\mathrm{S}}}) \in [-1,1]$.

Each agent $i \in \mathcal{V}$ minimizes a quadratic cost over $\bm{z} \coloneqq (z_1, \dots, z_N)^\top$ and the profile of private opinions $\bm{x}\coloneqq(x_1, \dots, x_N)^\top$:
\begin{align}
    \label{eq:cost_i}
    c_i(\bm{z},\bm{x}) =\, & \alpha_i\overbrace{(z_i-x_i)^2}^{\text{\small Inertia}} + (1-\alpha_i)\overbrace{\Big(z_i-\sum_{j \in \mathcal{N}_i^{\mathrm{S}}} w_{ij}\,x_j \Big)^2}^{\text{\small Social conformity}} \nonumber \\
                           & +\beta_i \sum_{k \in \mathcal{N}_i^{\mathrm{A}}}\underbrace{{(z_k-x_i)^2}}_{\text{\small Persuasion}}\,,
\end{align}
where $\alpha_i \in [0,1)$ parameterizes the agent's personal \emph{inertia}, and $\beta_i \geq 0$ defines the agent's \emph{persuasion drive}.

The first two terms determine how agent $i$ updates its opinion as a receiver, penalizing deviations of the updated opinion from the private opinion $x_i$ (\emph{inertia}) and from the aggregated private opinions of sources (\emph{social conformity}).

The third term governs the agent's communication strategy as a \textbf{sender}. It imposes a penalty when subsequent updates $z_k$ of the audience deviate from $i$'s private opinion $x_i$. This models the incentive to persuade or manipulate public perception.

\begin{remark}[Influence vs. persuasion]
    \label{rem:weighted_influence}    Influence ($w_{ki}$) dictates how much receiver $k$ listens to source $i$. Persuasion ($\beta_i$) dictates the sender's drive to manipulate the audience. We weight all audience members equally in the persuasion term to isolate network-induced aggregation. $\beta_i = 0$ yields aligned communication; $\beta_i > 0$ introduces strategic conflict.
\end{remark}


The choice of the updated opinion $z_i$ is a local optimization problem: agent $i$ minimizes the conditional expectation of \eqref{eq:cost_i} given its information $(x_i, \bm{m}_{\mathcal{N}_i^{\mathrm{S}}})$. Because the cost is quadratic in $z_i$, this is equivalent to substituting each unknown source state $x_j$ with a decoded point estimate $\mu_j(m_j)$: the posterior mean $\mathbb{E}[x_j \mid m_j]$ under Bayesian decoding, or the face value $m_j$ under naive decoding.

\begin{proposition}[Optimal opinion update]
    \label{prop:optimal_update}
    Given decoded estimates $\mu_j(m_j)$ for all $j\in\mathcal N_i^{\mathrm S}$, the \emph{unique} update $z_i^*$ that minimizes agent $i$'s evaluated cost is:
    \begin{equation}
        \label{eq:optimal_update}
        z_i^{*}=\alpha_i x_i+(1-\alpha_i)\sum_{j \in\mathcal{N}_i^{\mathrm{S}}} w_{ij}\,\mu_j(m_j).
    \end{equation}
\end{proposition}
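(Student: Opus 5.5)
We minimize the conditional expectation of the cost \eqref{eq:cost_i} given agent $i$'s information $(x_i,\bm{m}_{\mathcal{N}_i^{\mathrm{S}}})$, treating the other agents' updates $z_k$ as outside $i$'s control at this stage.

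\textbf{Step 1: isolate the terms that depend on $z_i$.} In $c_i$, the variable $z_i$ appears only in the inertia and social-conformity terms. The persuasion term involves $z_k$ for $k\in\mathcal N_i^{\mathrm A}$, and $k\neq i$ because $w_{ii}=0$. These $z_k$ are chosen by other agents from their own information, so the persuasion term is additive and constant in $z_i$. It can therefore be dropped from agent $i$'s update problem.

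\textbf{Step 2: expand the expectation around the aggregate.} Write $S_i\coloneqq\sum_{j} w_{ij}x_j$. Since $x_i$ is known to agent $i$, the relevant objective is
\[
J(z_i)=\alpha_i(z_i-x_i)^2+(1-\alpha_i)\,\mathbb{E}\big[(z_i-S_i)^2\mid x_i,\bm m_{\mathcal N_i^{\mathrm S}}\big].
\]
Expand the second term as
\[
\mathbb{E}[(z_i-S_i)^2\mid\cdot]=(z_i-\mathbb{E}[S_i\mid\cdot])^2+\operatorname{Var}(S_i\mid\cdot),
\]
where the variance does not depend on $z_i$. By linearity, $\mathbb{E}[S_i\mid\cdot]=\sum_j w_{ij}\,\mathbb{E}[x_j\mid x_i,\bm m]$.

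\textbf{Step 3: replace each conditional mean by the decoded estimate.} The private opinions are independent (Standing Assumption~\ref{asm:graph}), and each message $m_j=\sigma_j(x_j)$ depends only on $x_j$. Hence $\mathbb{E}[x_j\mid x_i,\bm m]=\mathbb{E}[x_j\mid m_j]$, which is exactly the Bayesian decoder $\mu_j(m_j)$. Under naive decoding, the statement takes $\mu_j(m_j)$ as the agent's point estimate by definition, i.e., the substitution is imposed. In both cases the evaluated cost is, up to a constant,
\[
\alpha_i(z_i-x_i)^2+(1-\alpha_i)\Big(z_i-\sum_j w_{ij}\mu_j(m_j)\Big)^2.
\]

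\textbf{Step 4: minimize the quadratic.} This is a quadratic in $z_i$ with leading coefficient $\alpha_i+(1-\alpha_i)=1>0$, so it is strictly convex. Setting its derivative to zero gives \eqref{eq:optimal_update} as the unique minimizer.

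It remains to check that this minimizer is feasible. Each $\mu_j(m_j)$ lies in $[-1,1]$ and $\bm W$ is row-stochastic, so $\sum_j w_{ij}\mu_j(m_j)\in[-1,1]$. Then $z_i^*$ is a convex combination of $x_i$ and this point, so $z_i^*\in[-1,1]$ and the constraint set imposes no boundary issue.

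\textbf{Main obstacle.} The delicate step is Step 3: justifying that conditioning on $x_i$ and on the other messages does not change the estimate of $x_j$. This rests on the independence of the $x_j$ together with each $m_j$ being a function of $x_j$ alone. A secondary point is Step 1, where one must argue that the persuasion term is unaffected by $z_i$, using $w_{ii}=0$.
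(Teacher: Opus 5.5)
Your proof is correct and takes the same route as the paper's. Both drop the persuasion term, which does not involve $z_i$, replace each unknown $x_j$ by its decoded estimate $\mu_j(m_j)$, and minimize the resulting strictly convex quadratic through its first-order condition. Your bias--variance decomposition and independence argument make rigorous the substitution step that the paper only justifies informally before the proposition. Your feasibility check, showing $z_i^*\in[-1,1]$ so the range constraint on $\zeta_i$ is inactive, is a small point the paper leaves implicit.
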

\begin{proof}
    Agent~$i$ selects $z_i$ to minimize \eqref{eq:cost_i} evaluated at the point estimates $\mu_j(m_j)$. The persuasion term $\beta_i\sum_{k \in \mathcal{N}_i^{\mathrm{A}}}(z_k-x_i)^2$ depends only on $z_k$ (opinion updates of the audience) and is independent of $z_i$. The remaining cost is strictly convex and quadratic in $z_i$ (its second derivative with respect to $z_i$ is $2 > 0$), guaranteeing a unique global minimum. Substituting $x_j$ with $\mu_j(m_j)$, the first-order condition
    yields \eqref{eq:optimal_update}.
\end{proof}
This update rule is DeGroot averaging with a self-weight $\alpha_i$ (inertia), in which the direct observation of neighbors' private states is replaced by their decoded public messages $\mu_j(m_j)$.

The network dynamics depend solely on the composite map from private opinion to decoded public estimate.
For each sender $i$, we define the \textbf{effective signaling function} $s_i : [-1,1] \to [-1,1]$ as $  s_i(x_i) \coloneqq \mu_i(\sigma_i(x_i))$,
and \eqref{eq:optimal_update} shows that all nonlinearity in the closed-loop dynamics is carried by $s_i(\cdot)$.
We then define the macroscopic opinion dynamics:
\begin{equation}
    \label{eq:opinion_update}
    \bm{x}(t+1) \coloneqq \bm{z}^*(t),
\end{equation}
where the optimal update in period $t$ becomes the private opinion entering period $t+1$.

In this framework, the map $s_i$ is the \emph{signaling protocol} through which agents turn internal opinions into public signals. A fully rational treatment would re-derive each protocol every period against the evolving, correlated joint distribution of $\bm{x}(t)$. This dynamic Bayesian game would require tracking a high-dimensional distribution that becomes interdependent after the first interaction. We therefore approximate this process by assuming \emph{myopic agents}:  just as the update \eqref{eq:optimal_update} is a one-step cost minimization rather than a farsighted policy, each agent optimizes its signaling protocol once,  in the stage game, and reuses it in every period. This formulation is standard in strategic networked opinion dynamics \cite{etesami2019influence, forster2016trust}, and it admits a two-time-scale reading: the communication convention $s_i$ forms on a slow social time scale, while opinion updates within that convention occur on the fast time scale captured by \eqref{eq:opinion_update}. This helps to isolate how network topology shapes the geometry of credible signaling. Therefore, we make the following assumption.

\begin{standassumption}[Stationary protocol]
    \label{asm:stationary_comm}
    For each agent $i$, the effective signaling map $s_i:[-1,1]\to[-1,1]$ is fixed at its equilibrium value against the common prior $\phi$ and reused unchanged in every period.
\end{standassumption}

Under Assumption~\ref{asm:stationary_comm}, the closed-loop strategic opinion dynamics reduce to:
\begin{equation}
    \label{eq:closed_loop_dynamics}
    x_i(t+1)
    =
    \alpha_i x_i(t)
    +
    (1-\alpha_i)\sum_{j\in\mathcal{N}_i^{\mathrm{S}}} w_{ij}\,s_j(x_j(t)).
\end{equation}

Using the notation $\bm{D}_{\alpha}\coloneqq \operatorname{diag}(\alpha_1,\dots,\alpha_N)$ and $\bm{s}(\bm{x})\coloneqq (s_1(x_1),\dots,s_N(x_N))^\top$, we write \eqref{eq:closed_loop_dynamics} compactly as
\begin{equation}
    \label{eq:vector_closed_loop}
    \bm{x}(t+1)
    =
    \bm{D}_{\alpha}\bm{x}(t)
    +
    (\bm{I}-\bm{D}_{\alpha})\bm{W}\bm{s}(\bm{x}(t)).
\end{equation}
The macroscopic behavior is therefore determined entirely by the shape of the maps $s_i(\cdot)$ derived in the following sections.

\section{Strategic Network Communication and Network-Induced Exaggeration}
\label{sec:exaggeration}

This section analyzes the formation of the signaling protocols. We model the communication phase as a stage game of incomplete information where agents choose encoding and decoding policies to minimize their expected costs.

\subsection{Strategies and Equilibrium Concept}
\label{subsec:equilibrium}
This framework defines a stage game of incomplete information for the signaling phase. Because only the posterior mean of a sender's private opinion enters the receiver's quadratic update rule \eqref{eq:optimal_update}, we formulate Bayesian consistency only in posterior-mean form (no full belief system is required).

For posterior expectations to be well-defined, we restrict our attention to pure, Borel-measurable policies. A \emph{strategy profile} is a tuple $(\boldsymbol{\sigma}, \boldsymbol{\mu}, \boldsymbol{\zeta})$ defining the policies for all agents:

\noindent\textbf{(1) Encoding:} $\sigma_i(x_i) = m_i$ (how agents signal).

\noindent\textbf{(2) Decoding:} $\mu_j(m_j) = \widehat{x}_j$ (how agents interpret).

\noindent\textbf{(3) Updating:} $\zeta_i(x_i, \mathbf{m}_{\mathcal{N}_i^{\mathrm{S}}}) = z_i$ (how agents update opinion).

We use \emph{Perfect Bayesian Equilibrium} (PBE)~\cite{fudenberg1991perfect} as a solution concept. To formalize this, let $\bm{\sigma} \coloneqq (\sigma_1, \dots, \sigma_N)^\top$ denote the joint encoding profile, $\bm{\zeta} \coloneqq (\zeta_1, \dots, \zeta_N)^\top$ denote the joint update profile, and $\bm{\mu} \coloneqq (\mu_1, \dots, \mu_N)^\top$ denote the joint posterior-mean decoding profile. Since initial opinions are independent, this implies that all audience members of an agent $i$ use the same decoder $\mu_i$ to interpret its public messages.
Prior to the realization of private opinions, agent $i$'s \emph{ex ante} expected cost under a given strategy profile is
$$J_i(\bm{\sigma}, \bm{\mu}, \bm{\zeta}) \coloneqq \mathbb{E}_{\bm{x}}\!\left[c_i\bigl(\bm{z},\bm{x}\bigr)\right]
    =\!\!\int_{\mathcal{X}}c_i\bigl(\bm{z}, \bm{x}\bigr) \prod_{j=1}^N \phi(x_j)\;\mathrm{d}\bm{x},$$

\noindent where updated opinions $\bm{z}$ are implicitly determined by the strategy profile $(\bm{\sigma}, \bm{\mu}, \bm{\zeta})$ for a given state realization $\bm{x}$.
\begin{definition}[Perfect Bayesian equilibrium]
    \label{def:pbe}
    A strategy profile $(\boldsymbol{\sigma}^*, \boldsymbol{\mu}^*, \boldsymbol{\zeta}^*)$ constitutes a PBE if, for every $i \in \mathcal{V}$:
    \begin{enumerate}[wide]
        \item \textbf{Optimal Updating:} $\zeta_i^*$ minimizes the expected cost \eqref{eq:cost_i} given the decoding policies $\boldsymbol{\mu}^*$.
        \item \textbf{Bayes Consistency:} The decoding policy is defined for all messages, $\mu_i^* : \mathcal{M} \to [-1,1]$, and satisfies Bayes' rule almost surely (a.s.) on the equilibrium path:
              \[
                  \mu_i^*(\sigma_i^*(x_i))
                  =
                  \mathbb E[x_i\mid \sigma_i^*(x_i)] \quad\text{a.s.}
              \]
        \item \textbf{Optimal Encoding:} $\sigma_i^*$ minimizes the agent's expected cost given the decoding and updating policies of the audience.
    \end{enumerate}
\end{definition}

We now isolate the object through which the graph topology enters a sender's communication problem.

\subsection{The Exaggeration Factor}
\label{subsec:graph_induced_exaggeration}

Consider agent $i$ during the encoding phase, assuming that the strategies of all other agents are fixed. Because the first two terms in \eqref{eq:cost_i} determine only receiver behavior, the agent's choice of public message $m_i$ affects its cost only through the persuasion term. We can therefore isolate the sender's objective by defining the expected persuasion loss:

\begin{equation}
    \label{eq:sender_loss}
    J^{\mathrm{e}}_i(x_i;\sigma_i) \coloneqq
    \mathbb E\Big[\beta_i\sum_{k\in\mathcal N_i^{\mathrm A}}(z_k^*-x_i)^2\;\Big|\; x_i
        \Big],
\end{equation}
where the expectation is over the states and messages.

For each audience member $k\in\mathcal N_i^{\mathrm A}$, define an effective audience weight, called the \emph{susceptibility weight}:
\begin{equation*}
    \widetilde{w}_{ki}\coloneqq (1-\alpha_k)w_{ki}.
\end{equation*}
This is the effective weight with which receiver $k$'s update uses sender $i$'s decoded signal.

\begin{definition}[Network-induced exaggeration factor]
    \label{def:exagg}
    For each sender $i$, define its \emph{exaggeration factor} by:
    \begin{equation}
        \label{eq:exag_def}
        \xi_i \coloneqq \frac{\sum_{k\in\mathcal N_i^{\mathrm A}}\widetilde{w}_{ki}}    {\sum_{k\in\mathcal N_i^{\mathrm A}}\widetilde{w}_{ki}^2} = \frac{\sum_{k\in\mathcal N_i^{\mathrm A}}(1-\alpha_k)w_{ki}}    {\sum_{k\in\mathcal N_i^{\mathrm A}}(1-\alpha_k)^2w_{ki}^2}.
    \end{equation}
\end{definition}

The exaggeration $\xi_i$ measures the inverse effective leverage of sender $i$'s public signal on its audience. Because audience members exhibit inertia ($\alpha_k$), their updated opinions $z_k$ tend to drift toward their own opinions. To counter this and pull the audience toward its desired state, sender $i$ must overstate its public message. When audience members place only a small effective weight $\widetilde{w}_{ki}$ on $i$'s signal (due to high inertia, competing sources, or weak influence), the required exaggeration factor $\xi_i$ increases. The factor $\xi_i$ is a local audience property rather than a global centrality metric; a large $\xi_i$ indicates that the sender has weak aggregate leverage over its immediate listeners. Figure~\ref{fig:xi_topology_gallery} illustrates this dependence on simple graph families with fixed agent counts, inertia, and weights.

To characterize the limits of this factor, note that because $0 < \widetilde{w}_{ki} \le 1$ for all $k \in \mathcal{N}_i^{\mathrm{A}}$, it follows that $\sum_{k \in \mathcal{N}_i^{\mathrm{A}}} \widetilde{w}_{ki}^2 \le \sum_{k \in \mathcal{N}_i^{\mathrm{A}}} \widetilde{w}_{ki}$, which guarantees $\xi_i \ge 1$. The degenerate case $\xi_i = 1$ holds if and only if every audience member has zero inertia ($\alpha_k = 0$) and listens exclusively to $i$ ($w_{ki} = 1$), yielding a truthful target $y_i = x_i$. Otherwise, $\xi_i > 1$.


\begin{figure}[t]
    \centering
    \includegraphics[width=\columnwidth]{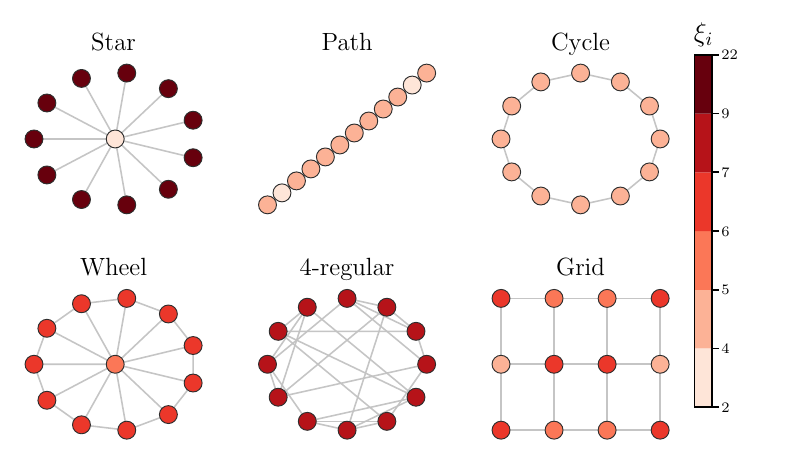}
    \caption{Network-induced exaggeration on example graphs. Each uses $N=12$, common inertia $\alpha_i=0.5$, and uniform weights over neighbors (row-normalized). Different graph topologies induce distinct local profiles of $\xi_i$, showing that the exaggeration factor is a topological quantity rather than a bias imposed externally.}
    \label{fig:xi_topology_gallery}
\end{figure}

\begin{theorem}[Encoding problem reduction]
    \label{thm:cost_reduction}
    Fix a sender $i$ with persuasion drive $\beta_i > 0$. Assuming that its audience applies the optimal update \eqref{eq:optimal_update}, sender $i$'s encoding problem is equivalent to minimizing the reduced objective:
    \begin{equation}
        \label{eq:reduced_cost}
        \mathbb E\left[
            \left(\mu_i(\sigma_i(x_i)) - \xi_i x_i + \delta_i\right)^2
            \right],
    \end{equation}
    where $\delta_i$ is the \emph{network-induced bias} defined as:
    \begin{equation}
        \label{eq:delta_def}
        \delta_i \coloneqq \frac{\sum_{k\in\mathcal N_i^{\mathrm A}} \widetilde{w}_{ki} (1-\alpha_k) \sum_{j\in\mathcal N_k^{\mathrm S}\setminus\{i\}} w_{kj}\mathbb E[\mu_j(m_j)]}{\sum_{k\in\mathcal N_i^{\mathrm A}} \widetilde{w}_{ki}^2}.
    \end{equation}
\end{theorem}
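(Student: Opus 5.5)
We substitute the optimal audience update from Proposition~\ref{prop:optimal_update} into the persuasion loss \eqref{eq:sender_loss} and complete the square in the decoded signal $s_i(x_i)=\mu_i(\sigma_i(x_i))$. For each $k\in\mathcal N_i^{\mathrm A}$, split the sum in \eqref{eq:optimal_update} into the term coming from $i$ and the rest:
\begin{equation*}
z_k^*-x_i=\widetilde w_{ki}\,s_i(x_i)-\Bigl(x_i-\alpha_k x_k-(1-\alpha_k)R_k\Bigr),
\end{equation*}
where $R_k\coloneqq\sum_{j\in\mathcal N_k^{\mathrm S}\setminus\{i\}}w_{kj}\mu_j(m_j)$. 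The residual $\alpha_k x_k+(1-\alpha_k)R_k$ depends only on $x_k$ and on states and messages of agents other than $i$. Note that $k\neq i$ because $w_{ii}=0$. Hence, conditionally on $x_i$, this residual is independent of $x_i$ and of $m_i$ by Assumption~\ref{asm:graph}. This is the point to state carefully: the other agents' strategies are held fixed, and the residual never involves $x_i$, since $i\notin\mathcal N_k^{\mathrm S}\setminus\{i\}$.

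Next, expand $\beta_i\sum_k\mathbb E[(z_k^*-x_i)^2\mid x_i]$ as a quadratic in $s=s_i(x_i)$:
\begin{equation*}
\beta_i\Bigl[s^2\textstyle\sum_k\widetilde w_{ki}^2-2s\sum_k\widetilde w_{ki}\bigl(x_i-\alpha_k\mathbb E[x_k]-(1-\alpha_k)\mathbb E[R_k]\bigr)\Bigr]+C(x_i),
\end{equation*}
where $C(x_i)$ does not depend on $\sigma_i$. Cross terms only involve conditional means of the residuals, and by independence these equal the unconditional means. By symmetry of the prior, $\mathbb E[x_k]=0$, so the $\alpha_k$ term drops out. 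Then divide by $\beta_i\sum_k\widetilde w_{ki}^2>0$; this is positive because $\mathcal N_i^{\mathrm A}\neq\emptyset$, $\alpha_k<1$ and $w_{ki}>0$. Completing the square gives
\begin{equation*}
\Bigl(s-\xi_i x_i+\delta_i\Bigr)^2+\widetilde C(x_i),
\end{equation*}
with $\xi_i$ as in \eqref{eq:exag_def} and $\delta_i$ as in \eqref{eq:delta_def}. The coefficient of $x_i$ is exactly $\sum\widetilde w_{ki}/\sum\widetilde w_{ki}^2$. The constant is $\sum_k\widetilde w_{ki}(1-\alpha_k)\mathbb E[R_k]/\sum\widetilde w_{ki}^2$, which matches $\delta_i$ after using linearity of $\mathbb E[R_k]$.

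Finally, take the expectation over $x_i$. The sender's ex ante cost equals a positive multiple of \eqref{eq:reduced_cost} plus a term independent of $\sigma_i$. The same holds pointwise in $x_i$, so interim and ex ante optimality agree almost surely. Therefore minimizing one is equivalent to minimizing the other.

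The main obstacle is the independence and measurability bookkeeping: showing that conditioning on $x_i$ leaves the residual's law unchanged, and that the term in $\mathbb E[R_k]$ can be treated as a constant. Both require the other agents' policies to be fixed functions of their own private states only, which holds for pure measurable policies with independent priors.
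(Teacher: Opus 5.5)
Your proposal is correct and follows essentially the same route as the paper. You split $z_k^*-x_i$ into $\widetilde w_{ki}\,\mu_i(m_i)$ plus a residual independent of $x_i$, replace the residual's conditional mean by $-x_i$ plus a constant using independence and $\mathbb E[x_k]=0$, and complete the square with $\sum_k\widetilde w_{ki}$ and $\sum_k\widetilde w_{ki}^2$ before taking the outer expectation; your explicit remark that $k\neq i$ (since $w_{ii}=0$) is bookkeeping the paper leaves implicit.
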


\begin{proof}
    For $k\in\mathcal N_i^{\mathrm A}$, Proposition~\ref{prop:optimal_update} gives
    \[
        z_k^*
        =
        \alpha_kx_k
        +\widetilde{w}_{ki}\mu_i(m_i)
        +(1-\alpha_k)
        \sum_{j\in\mathcal N_k^{\mathrm S}\setminus\{i\}}
        w_{kj}\mu_j(m_j).
    \]

    $\text{Define }
        H_{ki}
        \coloneqq
        \alpha_kx_k
        +(1-\alpha_k)
        \sum_{j\in\mathcal N_k^{\mathrm S}\setminus\{i\}}
        w_{kj}\mu_j(m_j)
        -x_i,
    $
    so that $z_k^*-x_i=\widetilde{w}_{ki}\,\mu_i(m_i)+H_{ki}$.

    By independence and zero-mean private opinions, we have $\mathbb E[x_k \mid x_i] = 0$ and $\mathbb E[\mu_j(m_j) \mid x_i] = \mathbb E[\mu_j(m_j)]$. Thus, taking the conditional expectation gives
    \[
        \mathbb E[H_{ki}\mid x_i] = -x_i + (1-\alpha_k)\!\!\!\!\sum_{j\in\mathcal N_k^{\mathrm S}\setminus\{i\}}\!\!\!\! w_{kj}\mathbb E[\mu_j(m_j)] \eqqcolon -x_i + h_{ki}.
    \]

    Expanding the square in \eqref{eq:sender_loss} and conditioning on $x_i$ yields, up to the multiplicative constant $\beta_i$ and additive terms independent of $m_i$ (such as $\mathbb E[H_{ki}^2\mid x_i]$):

    \begin{align*}
         & \sum_{k\in\mathcal N_i^{\mathrm A}}
        \left(
        \widetilde{w}_{ki}^2\mu_i(m_i)^2
        + 2\widetilde{w}_{ki}\mu_i(m_i)\mathbb E[H_{ki}\mid x_i]
        \right)\nonumber                       \\
         & =
        \sum_{k\in\mathcal N_i^{\mathrm A}}
        \left(
        \widetilde{w}_{ki}^2\mu_i(m_i)^2
        - 2\widetilde{w}_{ki}x_i\mu_i(m_i)
        + 2\widetilde{w}_{ki}h_{ki}\mu_i(m_i)
        \right).
    \end{align*}

    Let $S_1\coloneqq\sum_{k\in\mathcal N_i^{\mathrm A}}\widetilde{w}_{ki}$, $S_2\coloneqq\sum_{k\in\mathcal N_i^{\mathrm A}}\widetilde{w}_{ki}^2$, and let $B_i \coloneqq \sum_{k\in\mathcal N_i^{\mathrm A}} \widetilde{w}_{ki} h_{ki}$. The message-dependent part of the sender's expected cost simplifies to $S_2\mu_i(m_i)^2-2S_1x_i\mu_i(m_i) + 2B_i\mu_i(m_i)$.
    By factoring out $S_2$ and completing the squares for $\mu_i(m_i)$, we obtain
    \begin{equation*}
        S_2\left[\mu_i(m_i) - \left(\frac{S_1}{S_2}x_i - \frac{B_i}{S_2}\right)\right]^2 - S_2\left(\frac{S_1}{S_2}x_i - \frac{B_i}{S_2}\right)^2.
    \end{equation*}

    By \Cref{def:exagg}, $\xi_i = S_1/S_2$ (note $S_2>0$ as $\mathcal{N}_i^{\mathrm{A}} \neq \emptyset$ and $\alpha_k < 1$ imply $\widetilde{w}_{ki}>0$). By expanding $B_i/S_2$, we see that it matches the definition of $\delta_i$ in \eqref{eq:delta_def}. Thus, the conditional expected cost $J^{\mathrm{e}}_i(x_i; \sigma_i)$ depends on the encoded message $m_i = \sigma_i(x_i)$ only through the term $S_2\left(\mu_i(\sigma_i(x_i)) - (\xi_i x_i - \delta_i)\right)^2$. Because the sender's total ex ante persuasion cost is the expectation of $J^{\mathrm{e}}_i(x_i; \sigma_i)$ over the prior of $x_i$, and since $S_2 > 0$ is a constant, taking the outer expectation over $x_i$ yields exactly \eqref{eq:reduced_cost}. Therefore, the sender's ex ante encoding problem is equivalent to minimizing \eqref{eq:reduced_cost}.
\end{proof}

Theorem~\ref{thm:cost_reduction} shows that sender $i$'s public-broadcast problem reduces to a scalar target $y_i(x_i)\coloneqq\xi_i x_i-\delta_i$.

While the gain $\xi_i$ is determined by the graph, the persuasion drive acts only as a switch: $\beta_i$ scales the sender's loss without affecting its minimizer, so every $\beta_i>0$ induces the same target $y_i$, and $\beta_i$ does not enter \eqref{eq:exag_def}. Only the distinction $\beta_i=0$ versus $\beta_i>0$ is behaviorally relevant; a graded persuasion would require an additional trade-off in the sender's cost, such as an intrinsic misreporting penalty, which we deliberately omit to isolate the network-induced mechanism.
The offset $\delta_i$ is a network-induced bias that captures the expected signals that audience members receive from third-party sources. Because its exact computation requires distance-two network information (namely, the in-neighborhoods and weights of audience members' sources), it introduces potential coupling across the entire graph. In the following, we show that this macroscopic network bias vanishes under standard conditions, decoupling the senders' optimization problems.

\begin{proposition}[Vanishing network bias]
    \label{prop:vanishing_bias}
    Under Assumption~\ref{asm:graph}, the network-induced bias vanishes ($\delta_i = 0$ for all $i \in \mathcal{V}$) in either of the following cases:
    \begin{enumerate}[wide]
        \item \textbf{Bayesian Decoding:} In any PBE, regardless of the encoding strategy $\sigma_j$.\label{prop:vanishing_bias:1}
        \item \textbf{Naive Decoding:} If receivers are naive ($\mu_j(m_j) = m_j$) and encoding is origin-symmetric ($\sigma_j(-x) = -\sigma_j(x)$).\label{prop:vanishing_bias:2}
    \end{enumerate}
\end{proposition}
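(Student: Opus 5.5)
The bias in \eqref{eq:delta_def} depends on the decoding and encoding profiles only through the scalars $\mathbb E[\mu_j(m_j)]$ for $j \in \mathcal N_k^{\mathrm S}\setminus\{i\}$. We therefore show that every such expectation is zero in each case. The numerator of $\delta_i$ is then a finite sum of zero terms, and since its denominator $\sum_k \widetilde{w}_{ki}^2$ is strictly positive (as noted in the proof of Theorem~\ref{thm:cost_reduction}), we get $\delta_i=0$ for every $i$. The graph structure plays no role beyond guaranteeing that this denominator is well defined.

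\textbf{Case (1): Bayesian decoding.} By Bayes consistency in Definition~\ref{def:pbe}, $\mu_j^*(\sigma_j^*(x_j)) = \mathbb E[x_j \mid \sigma_j^*(x_j)]$ almost surely on the equilibrium path. We take expectations of both sides and apply the tower property, which holds because $\sigma_j^*$ is Borel measurable and $x_j$ is bounded, hence integrable. This gives $\mathbb E[\mu_j^*(m_j)] = \mathbb E[x_j]$. Assumption~\ref{asm:graph} says the prior $\phi$ is symmetric around zero and supported on $[-1,1]$, so $\mathbb E[x_j] = 0$. Nothing here uses the form of $\sigma_j^*$, so the conclusion holds regardless of the encoding strategy. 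Off-path values of $\mu_j^*$ do not matter, since they occur with probability zero.

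\textbf{Case (2): naive decoding.} Here $\mathbb E[\mu_j(m_j)] = \mathbb E[\sigma_j(x_j)] = \int_{-1}^{1}\sigma_j(x)\phi(x)\,\mathrm dx$. The change of variables $x\mapsto -x$, together with $\phi(-x)=\phi(x)$ and $\sigma_j(-x)=-\sigma_j(x)$, shows that this integral equals its own negative. It is finite because $\sigma_j$ takes values in $[-1,1]$ and is measurable, so it must be zero.

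\textbf{Main subtlety.} The delicate point is in case (1): the bias entering agent $i$'s reduced problem must be evaluated at the equilibrium strategies of the other senders $j\neq i$. These strategies are held fixed in Theorem~\ref{thm:cost_reduction}, so Bayes consistency applies to them directly. There is no circularity, because the argument needs only the equilibrium property of the other senders and never the optimal choice of sender $i$ itself.
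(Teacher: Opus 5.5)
Your proposal is correct and follows the paper's proof essentially step for step. Case (1) uses the law of iterated expectations under Bayes consistency, and case (2) uses oddness of $\sigma_j$ against the symmetric prior; each gives $\mathbb E[\mu_j(m_j)]=0$, which forces $\delta_i=0$. Your added points (positivity of the denominator, integrability, and evaluating the bias at the other senders' fixed equilibrium strategies) are sound and only make explicit what the paper leaves implicit.
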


\begin{proof}
    By Assumption \ref{asm:graph}, the prior is symmetric ($\phi(x) = \phi(-x)$ for all $x \in [-1,1]$), which implies $\mathbb{E}[x_i] = 0$.
    For case \ref{prop:vanishing_bias:1}, Bayes' consistency requires that the unconditional expectation of the posterior belief equals the prior mean. Thus, by the Law of Iterated Expectations, $\mathbb{E}[\mu_j(m_j)] = \mathbb{E}[\mathbb{E}[x_j \mid m_j]] = \mathbb{E}[x_j] = 0$.
    For case \ref{prop:vanishing_bias:2}, because the prior is symmetric around zero, any odd encoding function $\sigma_j$ yields an expected broadcast of $\mathbb{E}[\sigma_j(x_j)] = 0$.
    In both cases, substituting $\mathbb{E}[\mu_j(m_j)] = 0$ into \eqref{eq:delta_def} yields $\delta_i = 0$.
\end{proof}

\Cref{prop:vanishing_bias} confirms that, in the Bayesian regime, the reduction to $\delta_i = 0$ is a natural requirement of equilibrium inference, demanding no further restrictive assumptions on the agents' strategy space. Moreover, as shown in \Cref{sec:credible_signaling}, for the uniform prior case, the resulting signaling protocol becomes odd symmetric without any auxiliary assumption.

For the naive regime, the requirement of an odd-symmetric protocol closely mirrors the standard structural assumptions found in other models. For instance, in \cite{bizyaeva2023nonlinear} and \cite{couthures2025from}, interaction functions are defined as odd (e.g., $s(x) = \tanh(x)$ or symmetric saturation) to preserve the neutrality of the origin and the symmetry of the opinion space. Proposition \ref{prop:vanishing_bias} proves that this standard symmetry is exactly what eliminates the network-induced bias and decouples the senders' signaling problems in a strategic setting.

\begin{corollary}[Centered scalar target]
    \label{cor:centered_scalar_target}
    Under the conditions of Proposition \ref{prop:vanishing_bias}, the sender's reduced objective in Theorem \ref{thm:cost_reduction} simplifies to a zero-intercept target:
    \begin{equation}
        \label{eq:centered_target}
        \mathbb E\left[ \left(\mu_i(\sigma_i(x_i))-\xi_i x_i\right)^2 \right].
    \end{equation}
\end{corollary}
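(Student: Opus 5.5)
The corollary follows by combining Theorem~\ref{thm:cost_reduction} with Proposition~\ref{prop:vanishing_bias}; no new estimate is needed. We fix a sender $i$ with $\beta_i>0$ and assume its audience uses the optimal update \eqref{eq:optimal_update}. Theorem~\ref{thm:cost_reduction} then says that $i$'s encoding problem is equivalent to minimizing \eqref{eq:reduced_cost}, with bias $\delta_i$ given by \eqref{eq:delta_def}.

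By \eqref{eq:delta_def}, $\delta_i$ is a fixed linear combination, with finite nonnegative coefficients, of the unconditional means $\mathbb E[\mu_j(m_j)]$ for $j\in\mathcal N_k^{\mathrm S}\setminus\{i\}$ and $k\in\mathcal N_i^{\mathrm A}$. The denominator $\sum_k\widetilde w_{ki}^2$ is strictly positive, as noted in the proof of Theorem~\ref{thm:cost_reduction}. It therefore suffices to show that every such mean vanishes, and this is exactly what Proposition~\ref{prop:vanishing_bias} provides in each case.
\begin{itemize}
\item \emph{Bayesian decoding in a PBE.} The law of iterated expectations gives $\mathbb E[\mu_j(m_j)]=\mathbb E[x_j]=0$.
\item \emph{Naive decoding with odd encoders.} Symmetry of $\phi$ gives $\mathbb E[\sigma_j(x_j)]=0$.
\end{itemize}
Substituting $\delta_i=0$ into \eqref{eq:reduced_cost} yields \eqref{eq:centered_target} directly.

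The only point needing care, and the main (mild) obstacle, is the meaning of ``equivalent'' in the Bayesian case. There, $\delta_i$ is computed from the other agents' decoders, which are themselves equilibrium objects. The reduction in Theorem~\ref{thm:cost_reduction} takes the other agents' strategies as fixed when $i$ optimizes, so $\delta_i$ is a constant with respect to $\sigma_i$. In a PBE, those fixed strategies satisfy Bayes consistency, so the constant equals zero. Hence $i$'s best-response problem against equilibrium play is exactly the minimization of \eqref{eq:centered_target}, with no circularity. In the naive case the same reasoning applies, with the oddness of the $\sigma_j$ as the standing hypothesis on the others' strategies.
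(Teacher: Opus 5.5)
Your proposal is correct and matches the paper's argument. The paper gives no separate proof: the corollary follows by substituting $\delta_i=0$ from Proposition~\ref{prop:vanishing_bias} into the reduced objective \eqref{eq:reduced_cost} of Theorem~\ref{thm:cost_reduction}, exactly as you do. Your added remark, that $\delta_i$ depends only on the other agents' fixed strategies and is therefore a constant in $i$'s best-response problem, is a sound clarification that the paper leaves implicit.
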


\begin{remark}[Public broadcast vs. targeted communication]
    \label{rem:targeted}
    If sender $i$ could target each audience member $k$ separately, the persuasion loss for receiver $k$ would be $(\widetilde{w}_{ki}y_{ki}-x_i)^2$, with targeted optimum $y_{ki}=x_i/\widetilde{w}_{ki}$. Public broadcast forces a single decoded signal, whose least-squares optimum over the audience is $y_i=\xi_i x_i$; thus $\xi_i$ aggregates receiver-specific persuasion incentives into one public-broadcast gain.
\end{remark}

\subsection{Exact Network-to-Scalar Reduction}
\label{subsec:exact_network_reduction}
We show that under Bayesian decoding, the strategic communication game decouples into independent scalar cheap talk games, which characterize the network game's equilibria. Theorem~\ref{thm:cost_reduction} decouples the sender's problem. We extend this decoupling to a global equilibrium equivalence in our main result below.

\begin{theorem}[Exact decoupling into scalar cheap talk]
    \label{thm:exact_reduction}
    A pure strategy profile $(\bm{\sigma}^*, \bm{\mu}^*,\bm{\zeta}^*)$ constitutes a PBE of the stage network game if and only if, for every agent $i\in\mathcal{V}$:
    \begin{enumerate}[wide]
        \item \textbf{Optimal Updating:} The update policy $\zeta_i^*$ is the unique minimum-cost response given by \eqref{eq:optimal_update}.\label{thm:exact_reduction:c1}
        \item \textbf{Bayes Consistency:} The decoding policy $\mu_i^*$ satisfies Bayes' rule almost surely (a.s.) on the equilibrium path:\label{thm:exact_reduction:c2}
              \[
                  \mu_i^*(\sigma_i^*(x_i)) = \mathbb{E}[x_i \mid \sigma_i^*(x_i)] \quad \text{a.s.}
              \]
        \item \textbf{Optimal Encoding:}\label{thm:exact_reduction:c3}
              \begin{itemize}
                  \item If $\beta_i > 0$, the encoding policy $\sigma_i^*$ is optimal for a scalar cheap talk game with state-dependent target $y_i(x_i) = \xi_i x_i$.
                        Specifically, for almost every $x_i \in [-1, 1]$, the encoded message $\sigma_i^*(x_i)$ satisfies:
                        $$ \left(\mu_i^*(\sigma_i^*(x_i)) - \xi_i x_i\right)^2 \le \left(\mu_i^*(m) - \xi_i x_i\right)^2, \quad \forall m \in \mathcal{M}. $$
                  \item If $\beta_i = 0$, the cost is independent of $m_i$ and any encoding, e.g., truthful revelation $\sigma_i^*(x_i) = x_i$, is weakly dominant.
              \end{itemize}
    \end{enumerate}
\end{theorem}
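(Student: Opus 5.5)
The proof is a direct comparison of Definition~\ref{def:pbe} with the three conditions, checked one condition at a time and in both directions. Conditions~\ref{thm:exact_reduction:c1} and~\ref{thm:exact_reduction:c2} need essentially no work. Optimal updating in Definition~\ref{def:pbe} asks $\zeta_i^*$ to minimize the expected cost given $\bm{\mu}^*$. By Proposition~\ref{prop:optimal_update} that minimizer is unique and given by \eqref{eq:optimal_update}, so the two requirements coincide. Bayes consistency is literally the same condition in both statements. All the work is therefore in the encoding condition, where we must show that ``$\sigma_i^*$ minimizes the sender's expected cost given the audience's decoding and updating'' is equivalent to the pointwise cheap-talk optimality in~\ref{thm:exact_reduction:c3}.

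For the encoding step with $\beta_i>0$, we first observe that inside any profile satisfying condition~\ref{thm:exact_reduction:c2}, every decoder obeys Bayes' rule. Proposition~\ref{prop:vanishing_bias}\ref{prop:vanishing_bias:1} then gives $\mathbb E[\mu_j(m_j)]=0$, hence $\delta_i=0$. Condition~\ref{thm:exact_reduction:c1} supplies the optimal-update assumption of Theorem~\ref{thm:cost_reduction}, so Corollary~\ref{cor:centered_scalar_target} applies. The sender's conditional cost therefore equals $\beta_i S_2\,(\mu_i^*(\sigma_i(x_i))-\xi_i x_i)^2$ plus a term independent of $m_i$. A subtle point must be checked here: the argument of Theorem~\ref{thm:cost_reduction} conditions on $x_i$ and holds the other agents' strategies fixed. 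Consequently the additive constant $\mathbb E[H_{ki}^2\mid x_i]$ and the cross term do not depend on $m_i$, so the reduction holds pointwise in $x_i$ and not only ex ante.

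Next we pass between ex ante and pointwise optimality, which I expect to be the main obstacle. The implication ``pointwise optimal a.e.\ $\Rightarrow$ ex ante optimal'' is immediate: integrate the inequality against $\phi$, using that the $m$-independent terms cancel. For the converse, I would argue by contradiction. Suppose the set $A$ of states $x_i$ on which some message strictly improves has positive prior measure. We need a \emph{measurable} deviation $\tilde\sigma_i$ that agrees with $\sigma_i^*$ off $A$ and on $A$ selects a message with strictly smaller loss. This strictly lowers $J_i$ while leaving $\mu_i^*$ fixed, since a PBE deviation is evaluated against fixed decoders, and it contradicts optimal encoding. The difficulty is measurable selection. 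Since $\mu_i^*$ is only Borel measurable, the infimum over $m$ of $(\mu_i^*(m)-\xi_i x)^2$ need not be attained. I would avoid needing the infimum by using a countable argument. The set $A$ is the union over rationals $q$ of the sets $A_q$ of states $x$ for which some message $m$ attains loss $<q<$ the current loss. Some $A_q$ has positive measure. For fixed $q$, the values $\mu_i^*(\mathcal M)$ are points in $[-1,1]$, and the condition ``$(v-\xi_i x)^2<q$'' is open in $x$. Covering by countably many such messages, which is possible because the relevant set $\{x:(\mu_i^*(m)-\xi_ix)^2<q\}$ is an open interval and countably many such intervals cover $A_q$, lets us pick a single message $m_0$ whose improvement set has positive measure. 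Deviating to $m_0$ on that set is then measurable. This yields the pointwise inequality for a.e.\ $x_i$.

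Finally, for $\beta_i=0$ the persuasion term vanishes. Sender $i$'s cost then depends on $m_i$ only through the other agents' updates $z_k$, and these do not enter the inertia and conformity terms of $c_i$. Hence every encoding, truthful revelation included, is a best response, and the encoding condition of Definition~\ref{def:pbe} is satisfied trivially. Combining the three conditions in both directions gives the equivalence.
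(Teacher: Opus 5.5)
Your proposal is correct and follows the same route as the paper's proof: a condition-by-condition comparison with Definition~\ref{def:pbe}, with optimal updating settled by Proposition~\ref{prop:optimal_update} and the encoding step reduced via Theorem~\ref{thm:cost_reduction} and Corollary~\ref{cor:centered_scalar_target} (with $\delta_i=0$ coming from Bayes consistency). Beyond the paper, you add an explicit justification of the step from ex ante to a.e.\ pointwise optimality, which the paper only asserts; your countable-cover (Lindel\"of) argument produces a Borel profitable deviation and is sound.
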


\begin{proof}
    See Appendix~\ref{app:proof_exact_reduction}.
\end{proof}

Theorem \ref{thm:exact_reduction} shows that, despite network coupling, finding a strategic communication equilibrium reduces to solving independent scalar cheap talk games where the network enters only through $\xi_i$.

\begin{remark}[State-dependent bias]
    \label{rem:CS_connection}
    In the classical Crawford-Sobel cheap talk framework \cite{crawford1982strategic}, the sender's loss is typically modeled with a constant bias $b \neq 0$, i.e., $(y - (x+b))^2$. In contrast, our network-derived objective in \eqref{eq:centered_target} can be rewritten as $(y - (x_i + \mathrm{bias}_i(x_i)))^2$, where $\mathrm{bias}_i(x) \coloneqq (\xi_i - 1)x$ acts as a \emph{state-dependent bias}. Because $\xi_i \geq 1$, this bias is proportional to the state itself and points strictly outward from the neutral opinion $x=0$. Consequently, the sender's incentive to manipulate is not a uniform preference for higher or lower actions, but an incentive to \emph{exaggerate}. The more extreme the agent's private opinion, the stronger the incentive to pull the audience further toward the boundaries.
\end{remark}

\section{Classification of Signaling Regimes}
\label{sec:taxonomy}
In this section, we evaluate the decoupled scalar game to classify the communication regimes, by varying two behavioral assumptions (summarized in Table~\ref{tab:taxonomy}):
\begin{enumerate}[wide]
    \item \textbf{Sender Incentives:} \emph{Is the sender's encoding objective aligned with the receiver's inference goal ($\beta_i = 0$), or is it distorted by a persuasion incentive ($\beta_i > 0$)?}
    \item \textbf{Receiver Decoding:} \emph{Do receivers interpret public messages at face value (naive decoding), or do they compute posterior estimates (Bayesian decoding)?}
\end{enumerate}

\subsection{Aligned Incentives and Linear Averaging}
\label{subsec:aligned_incentives}

If the persuasion incentive is absent ($\beta_i=0$), the sender's cost is independent of its own message and every encoding policy is weakly dominant (Theorem~\ref{thm:exact_reduction}); truthful revelation is then the natural convention, as it is the one minimizing every receiver's estimation loss. Selecting this efficient convention, $\sigma_i(x_i)=x_i$ and $\mu_i(m_i)=m_i$, and hence $s_i(x_i)=x_i$.
Substituting this signal function into \eqref{eq:vector_closed_loop} gives
\[
    \bm x(t+1)
    =
    \bm D_\alpha\bm x(t)
    +(\bm I-\bm D_\alpha)\bm W\bm x(t).
\]
This is DeGroot averaging \cite{degroot1974}, with $\bm D_\alpha$ acting as a self-weight; augmenting the cost with a fixed prejudice term recovers a Friedkin-Johnsen-type \cite{friedkin1990} dynamic.

\subsection{Persuasion with Naive Receivers: Saturated Signaling}
\label{subsec:saturated_signaling}
A receiver is naive when it decodes a public message at face value, $\mu_i(m_i)=m_i$, without correcting for senders' encoding strategies.

This regime formalizes \emph{persuasion bias} or \emph{naive learning}, a form of bounded rationality in social networks studied in, e.g., \cite{demarzo2003persuasion, golub2010naive}. Because fully Bayesian inference over network topologies is computationally intractable for human agents, individuals often rely on the heuristic of treating incoming messages as independent and truthful \cite{demarzo2003persuasion}, an assumption extended to strategic network interactions \cite{forster2016trust}.

Under this behavioral assumption, Theorem~\ref{thm:cost_reduction} yields that the unconstrained target of a persuasive sender is $m_i=\xi_i x_i$. Since $\xi_i\ge 1$, the sender possesses a structural incentive to exaggerate outward from the neutral opinion.

If the publicly interpretable signal range is bounded to $[-1,1]$, the optimal constrained signal is given by the clipped saturation function:
\begin{equation}
    \label{eq:saturated_target}
    s_i(x_i)
    =
    \operatorname{sat}(\xi_i x_i-\delta_i)
    \coloneqq
    \min(1,\max(-1,\xi_i x_i-\delta_i)).
\end{equation}
In the origin-symmetric naive convention, Proposition~\ref{prop:vanishing_bias} gives $\delta_i=0$, so this reduces to $\operatorname{sat}(\xi_i x_i)$.

Thus, this strategic-sender/naive-receiver regime recovers the saturated/nonlinear dynamics studied in \cite{bizyaeva2023nonlinear,couthures2025from} as a specific behavioral regime.
While existing literature uses sigmoidal and saturated interaction functions as modeling primitives, we show that this saturation emerges endogenously as the optimal strategy of an agent responding to network-induced persuasion incentives against a naive audience.

\begin{table}[t]
    \caption{Classification of Induced Signaling Regimes}
    \label{tab:taxonomy}
    \centering
    \footnotesize
    \setlength{\tabcolsep}{4pt} 
    \renewcommand{\arraystretch}{1.3} 

    \newcolumntype{Y}[1]{>{\hsize=#1\hsize\raggedright\arraybackslash}X}

    \begin{tabularx}{\columnwidth}{|c|l|Y{1.15}|Y{0.85}|}
        \cline{3-4}
        \multicolumn{2}{c|}{}                                              & \multicolumn{2}{c|}{\textbf{Receiver}}                                                                                                                                                                                   \\ \cline{3-4}
        \multicolumn{2}{c|}{}                                              & \multicolumn{1}{c|}{\textbf{Naive Decoding}} & \multicolumn{1}{c|}{\textbf{Bayesian Decoding}}                                                                                                                           \\ \hline
        \multirow{2}{*}[-1.0ex]{\rotatebox[origin=c]{90}{\textbf{Sender}}} & \textbf{Aligned}                             & $s_i(x)=x$ (linear \cite{degroot1974,friedkin1990})                                                       & $s_i(x)=x$ (linear)                                           \\ \cline{2-4}
                                                                           & \textbf{Persuasive}                          & $s_i(x)=\operatorname{sat}(\xi_i x)$ (nonlinear/saturated \cite{bizyaeva2023nonlinear,couthures2025from}) & $s_i(x)=q_i(x)$ (strategic CODA \cite{martins2008continuous}) \\ \hline
    \end{tabularx}
\end{table}
Figure~\ref{fig:scalar_signaling_maps} gives the corresponding signaling map for persuasive senders. The same network-induced target $\xi x$ yields a clipped public signal under naive decoding and an interval quantizer under Bayesian decoding. The thresholds and reconstruction levels are characterized in Section~\ref{sec:credible_signaling}.

\subsection{Persuasion with Bayesian Receivers}
\label{subsec:credibility_gap}

Under Bayesian decoding, receivers anticipate the sender's preference $\xi_i x_i$. For $\xi_i > 1$, full revelation is not credible: a sender has an incentive to shift the estimate closer to $\xi_i x_i$ (for $x_i\neq0$), mirroring Crawford-Sobel cheap talk \cite{crawford1982strategic,melumad1991communication}. In Section~\ref{sec:credible_signaling}, we show that this network-induced misalignment restricts equilibrium communication to interval quantizers.

\section{Credible Signaling under Bayesian Decoding}
\label{sec:credible_signaling}
We fix a generic sender and suppress the network index to analyze the local communication game.

\begin{definition}[Reduced scalar game]
    \label{def:reduced_game}
    Given a gain $\xi>1$ and a prior $\phi$ on $[-1,1]$, the \emph{reduced scalar game} $\Gamma(\xi,\phi)$ is the cheap talk game in which the sender privately observes $x\sim\phi$ and broadcasts $m=\sigma(x)\in\mathcal M$, receivers apply the decoder $y=\mu(m)$, and the sender's loss function is $g(x,y)\coloneqq(y-\xi x)^2$. An \emph{equilibrium} of $\Gamma(\xi,\phi)$ is a pair $(\sigma,\mu)$ satisfying Bayes-consistent decoding and pointwise-optimal encoding, i.e., conditions \ref{thm:exact_reduction:c2} and \ref{thm:exact_reduction:c3} of Theorem~\ref{thm:exact_reduction}.
\end{definition}

By Theorem~\ref{thm:exact_reduction}, a strategy profile is a PBE of the network game if and only if, for every agent $i$ with $\beta_i>0$, the pair $(\sigma_i,\mu_i)$ is an equilibrium of $\Gamma(\xi_i,\phi)$.

\begin{theorem}[Credibility and interval quantization]
    \label{thm:credibility_and_structure}
    Let the prior density in $\Gamma(\xi,\phi)$ satisfy $\phi(x)>0$ on $[-1,1]$, and fix $\xi>1$. In the reduced game $\Gamma(\xi,\phi)$:
    \begin{enumerate}[wide, label=\roman*), ref=(\roman*)]
        \item \textbf{No full revelation:} No equilibrium satisfies
              $\mu(\sigma(x))=x$ for all $x \in [-1,1]$.
              \label{thm:credibility_and_structure:1}

        \item \textbf{Interval quantization:} In any equilibrium, the effective
              signaling map is an interval quantizer almost surely.
              \label{thm:credibility_and_structure:2}
    \end{enumerate}
\end{theorem}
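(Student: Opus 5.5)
For part \ref{thm:credibility_and_structure:1} I would argue by contradiction. Suppose $(\sigma,\mu)$ is an equilibrium with $\mu(\sigma(x))=x$ for all $x\in[-1,1]$. Then the set of induced estimates $\{\mu(m):m\in\mathcal M\}$ contains all of $[-1,1]$. Take any $x\in(0,1/\xi)$, so that the target $\xi x$ lies in $(x,1)\subset[-1,1]$. There is a message $m'$ with $\mu(m')=\xi x$, and deviating to it gives loss $0<(x-\xi x)^2=(\xi-1)^2x^2$. This violates optimal encoding (condition \ref{thm:exact_reduction:c3} of Theorem~\ref{thm:exact_reduction}) on the set $(0,1/\xi)$. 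That set has positive prior mass because $\phi>0$, so the violation cannot be dismissed as a null set, and we have a contradiction. In fact the deviation works for every $x\neq0$: for $x\ge1/\xi$, sending the message decoded as $1$ also strictly improves, since $|1-\xi x|<(\xi-1)|x|$.

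For part \ref{thm:credibility_and_structure:2} I would use the standard Crawford--Sobel monotonicity argument. Let $Y\coloneqq\mu(\mathcal M)\subset[-1,1]$ be the set of induced actions, and write $s\coloneqq\mu\circ\sigma$. By optimal encoding, for a.e. $x$ the value $s(x)$ minimizes $(y-\xi x)^2$ over $Y$. Since $\partial^2 g/\partial x\,\partial y=-2\xi<0$, the loss satisfies single crossing, so $s$ is a.e. nondecreasing in $x$. Concretely, if $y=s(x)$ and $y'=s(x')$ with $x<x'$, adding the two optimality inequalities gives $(y'-y)(x'-x)\ge0$. Hence each preimage $s^{-1}(y)$ is, up to null sets, an interval, and $s$ is a nondecreasing step-like map whose level sets are intervals. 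This is already the interval structure. It remains to show that $s$ cannot be strictly increasing on any set of positive measure, i.e., that there is no ``continuous part''. The paper may also intend to show that the number of steps is finite, and I would address both.

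To rule out the continuous part, I would use Bayes consistency. On an interval $I$ (of positive measure) where $s$ is strictly increasing, each level set is a.e. a single point, so the posterior mean gives $s(x)=x$ a.e. on $I$. Then the local deviation argument from part \ref{thm:credibility_and_structure:1} applies. For $x\in I$ with $x\neq0$, the action $s(x'')=x''$ for a nearby $x''\in I$ lying between $x$ and $\xi x$ is available and strictly better, which contradicts optimality on a positive-measure set.

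For finiteness, I expect the key fact to be that every induced action $y$ is the conditional mean of its interval $[a,b]$, so $y\in[a,b]$. Indifference of the boundary type $a$ between adjacent actions $y^-<y$ requires $\xi a=(y^-+y)/2$. Combining this with $y^-\le a\le y$ gives $y-y^-\ge 2(\xi-1)|a|$, so the gaps between consecutive actions are bounded below away from $0$. Only near $0$ can actions accumulate. Whether infinitely many bins can cluster at the origin is the main obstacle, and the claim as stated (``interval quantizer a.s.'') may tolerate countably many bins. I would first check the convention: if an interval quantizer allows countably many cells, monotonicity together with the exclusion of the continuous part suffices. Otherwise I would try to bound cell widths near $0$ via the boundary equations $2\xi a_k=y_k+y_{k+1}$ with $y_k\in[a_k,a_{k+1}]$. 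These force $a_{k+1}\ge(2\xi-1)a_k$ for positive boundaries, a geometric growth that allows infinitely many cells accumulating only at $0$. So I would state the result with possibly countably many intervals, which is still an interval quantizer.
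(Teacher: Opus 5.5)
Your proposal follows the paper's proof essentially step for step. Part (i) is the same deviation to the message decoded as $\xi x$ for $x\in(0,1/\xi)$. Part (ii) uses the same single-crossing monotonicity argument, then excludes singleton level sets via Bayes consistency ($s(x)=x$ there) plus a local deviation, and accepts countably many bins, exactly as the paper concludes.

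One step is stated too narrowly. You run the deviation on ``an interval $I$ where $s$ is strictly increasing,'' but the union $\mathcal S$ of singleton level sets can have positive measure without containing any interval. For instance, take a fat Cantor set $C\subset[-1,1]$ and the map equal to $x$ on $C$ and to the midpoint of each gap of $C$ on that gap. This map is nondecreasing and Bayes-consistent under the uniform prior, yet your interval argument never touches it.

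The fix is the paper's formulation. If $\mathcal S$ has positive measure, then $s(x)=x$ a.e.\ on $\mathcal S$. On one side of the origin, say $(0,1)$, pick a density point $x_0$ of $\mathcal S$ at which optimality and $s(x_0)=x_0$ hold. It has points $x'\in\mathcal S$ with $x_0<x'<\min\{\xi x_0,1\}$, so mimicking $x'$ is profitable.

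Separately, your aside that every $x\neq 0$ has a profitable deviation fails at $x=\pm1$. There $|1-\xi|=\xi-1$ and truth-telling is already optimal, though nothing in your argument depends on this.
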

\begin{proof}
    See Appendix~\ref{app:proof_thm3}.
\end{proof}

Theorem~\ref{thm:credibility_and_structure} shows that every equilibrium of the reduced game is an interval quantizer, allowing us to describe equilibria as \emph{credible codes}.
\begin{definition}[Credible code]
    \label{def:credible_code}
    A \emph{credible code} of $\Gamma(\xi,\phi)$ is the interval-quantizer representation of an equilibrium: an ordered partition of $[-1,1]$ into bins $\{\mathcal B_\ell\}$ together with the posterior-mean reconstruction levels $r_\ell=\mathbb E[x\mid x\in\mathcal B_\ell]$, such that every state in $\mathcal B_\ell$ prefers $r_\ell$ to every other level.
\end{definition}

From a sociological perspective, a speaker cannot credibly convey subtle nuances to a skeptical audience. Instead, public communication is limited to a few distinct stances, each representing a range of private opinions.

Next, we characterize the thresholds and reconstruction levels of credible codes for any codebook cardinality $L$.

\subsection{Equilibrium Geometry under a Uniform Prior}
\label{subsec:equilibrium_geometry}
We now specialize to a uniform prior $x\sim\mathcal U[-1,1]$. Consider an $L$-bin credible code with thresholds
\[
    -1=\tau_0<\tau_1<\cdots<\tau_L=1.
\]
On each bin $\mathcal B_\ell=[\tau_{\ell-1},\tau_\ell)$, with the final bin closed at $1$, the \emph{reconstruction level} (or representative) is
\begin{equation}
    \label{eq:bin_midpoint}
    r_\ell
    =
    \mathbb E[x\mid x\in\mathcal B_\ell]
    =
    \frac{\tau_{\ell-1}+\tau_\ell}{2}.
\end{equation}

\begin{proposition}[Uniform-prior indifference conditions]
    \label{prop:indifference}
    An $L$-bin partition with thresholds $\{\tau_\ell\}_{\ell=0}^{L}$ is an
    equilibrium of the reduced game if and only if:
    \begin{enumerate}[wide, label=\roman*), ref=(\roman*)]
        \item \textbf{Boundary conditions:} $\tau_{0} = -1$ and $\tau_{L} = 1$.\label{prop:indifference:1}
        \item \textbf{Indifference:} For each interior threshold $\ell = 1,\dots, L{-}1$,\label{prop:indifference:2}
              \begin{equation}\label{eq:recurrence}
                  \tau_{\ell+1} - (4\xi - 2)\,\tau_\ell + \tau_{\ell-1} = 0.
              \end{equation}
    \end{enumerate}
\end{proposition}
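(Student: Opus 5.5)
Since Theorem~\ref{thm:credibility_and_structure} guarantees that every equilibrium of $\Gamma(\xi,\phi)$ is an interval quantizer, I would treat an $L$-bin partition with midpoint levels \eqref{eq:bin_midpoint} as the candidate. The claim is that Bayes consistency plus pointwise-optimal encoding (conditions \ref{thm:exact_reduction:c2}--\ref{thm:exact_reduction:c3} of Theorem~\ref{thm:exact_reduction}) reduce exactly to \ref{prop:indifference:1}--\ref{prop:indifference:2}. Bayes consistency is automatic: under the uniform prior, decoding each bin's message to $r_\ell=(\tau_{\ell-1}+\tau_\ell)/2$ is the posterior mean. Off-path messages can be mapped to some existing $r_\ell$, so they never help a deviator. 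The boundary conditions \ref{prop:indifference:1} simply say the bins partition $[-1,1]$. Hence the real content is the encoding-optimality condition: each $x\in\mathcal B_\ell$ must weakly prefer $r_\ell$ to every $r_{\ell'}$ under the loss $g(x,y)=(y-\xi x)^2$.

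\textbf{Necessity.} I will use continuity in $x$ of $g(x,r_\ell)-g(x,r_{\ell+1})$. States just left of $\tau_\ell$ weakly prefer $r_\ell$, and states just right weakly prefer $r_{\ell+1}$, so at the threshold the sender is indifferent: $|r_\ell-\xi\tau_\ell|=|r_{\ell+1}-\xi\tau_\ell|$. Since $r_\ell<r_{\ell+1}$, this forces $\xi\tau_\ell=(r_\ell+r_{\ell+1})/2$. Substituting the midpoints gives $4\xi\tau_\ell=\tau_{\ell-1}+2\tau_\ell+\tau_{\ell+1}$, which rearranges to \eqref{eq:recurrence}.

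\textbf{Sufficiency.} Given the recurrence, I will show that no profitable deviation exists. The key observation is that $g(x,r_\ell)-g(x,r_{\ell'})$ is affine in $x$ with slope $2\xi(r_{\ell'}-r_\ell)$, by expanding the squares. So the preference between two levels switches exactly once, at $x=(r_\ell+r_{\ell'})/(2\xi)$. The induced best-response correspondence therefore assigns $r_\ell$ to the set of $x$ with $\xi x$ closest to $r_\ell$. This is the nearest-neighbor rule for the scaled target $\xi x$ among ordered levels, whose cells are intervals with cutpoints $(r_\ell+r_{\ell+1})/(2\xi)$. By \eqref{eq:recurrence} these cutpoints equal $\tau_\ell$. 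Consequently each $x\in\mathcal B_\ell$ has $r_\ell$ as its nearest level, for all $\ell'$ and not only adjacent ones, and pointwise optimality holds.

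\textbf{Main obstacle.} The main subtlety is this reduction from adjacent-only indifference to global optimality. The nearest-neighbor argument needs the levels to be strictly increasing, which follows from $\tau_0<\dots<\tau_L$. It also needs the cutpoints $(r_\ell+r_{\ell+1})/(2\xi)$ to be ordered, which follows from monotonicity of the $r_\ell$. A minor point is handling the measure-zero indifference at thresholds, which the ``almost every $x$'' clause of Theorem~\ref{thm:exact_reduction} absorbs.
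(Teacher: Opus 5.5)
Your proposal is correct and follows essentially the same route as the paper: necessity via indifference at each interior threshold (which you justify slightly more carefully through continuity of $g(x,r_\ell)-g(x,r_{\ell+1})$ and the a.e.\ optimality clause), and sufficiency via the single-crossing property of $g(x,y)=(y-\xi x)^2$. Your nearest-neighbor reading for the scaled target $\xi x$, with cutpoints $(r_\ell+r_{\ell+1})/(2\xi)=\tau_\ell$, is a cleaner packaging of the paper's step from adjacent to non-adjacent levels, which the paper handles by appealing to transitivity of single crossing.
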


\begin{proof}
\emph{Necessity ($\Rightarrow$):} In any equilibrium, the state at each interior threshold $\tau_\ell$ must be indifferent between the adjacent bins $\mathcal{B}_\ell$ and $\mathcal{B}_{\ell+1}$. Since the sender's pointwise loss is $g(x,y) = (y - \xi\,x)^2$, this indifference requires $(r_{\ell} - \xi\,\tau_\ell)^2 = (r_{\ell+1} - \xi\,\tau_\ell)^2$. Because the bins are strictly ordered ($r_\ell < r_{\ell+1}$), the terms inside the squares cannot be identical. Thus, $(r_\ell - \xi\tau_\ell) = -(r_{\ell+1} - \xi\tau_\ell)$ and rearranging yields $({r_{\ell}+ r_{\ell+1}})/2 = \xi\tau_{\ell}$. Substituting~\eqref{eq:bin_midpoint} into this relation yields~\eqref{eq:recurrence}, showing that any equilibrium partition must satisfy condition~\ref{prop:indifference:2}.

    \medskip\noindent
    \emph{Sufficiency ($\Leftarrow$):} Assuming \ref{prop:indifference:1} and \ref{prop:indifference:2}, it suffices to show that every state $x$ in bin~$\mathcal{B}_\ell$ weakly prefers the reconstruction level $r_{\ell}$ to every other level $r_k$, with $k\neq\ell$. Because the cross-partial derivative is
    \[
        \frac{\partial^2 g}{\partial x\,\partial y}
        = -2\xi < 0,
    \]
    for any two reconstruction levels $y'>y$, the difference $g(x,y') - g(x,y)$ is strictly decreasing in~$x$. Consequently, if the state at threshold $\tau_\ell$ is indifferent between $r_{\ell}$ and $r_{\ell+1}$, then every $x < \tau_\ell$ strictly prefers $r_{\ell}$ and every $x > \tau_\ell$ strictly prefers $r_{\ell+1}$. By the transitivity of this single-crossing property, this extends to all non-adjacent pairs $(r_{\ell}, r_k)$ with $|k-\ell|>1$.
\end{proof}

\begin{remark}[Generic prior]
    \label{rem:generic_prior}
    For the general prior $\phi$ with cumulative distribution function $\Phi$, an interval bin
    $\mathcal B_\ell=[\tau_{\ell-1},\tau_\ell)$ has reconstruction level
    \[
        r_\ell
        =
        \frac{1}
        {\Phi(\tau_\ell)-\Phi(\tau_{\ell-1})} \int_{\tau_{\ell-1}}^{\tau_\ell}x \phi(x)\,\mathrm{d}x.
    \]
    At each interior threshold, sender indifference requires
    $
        {(r_\ell+r_{\ell+1})}/{2}=\xi\tau_\ell.
    $
\end{remark}

The indifference condition~\eqref{eq:recurrence} constitutes a second-order linear difference equation with constant
coefficients. Its characteristic equation is $\theta^2 - (4\xi - 2)\theta + 1 = 0$.

For $\xi > 1$, the discriminant $16\xi(\xi-1)$ is strictly positive. The two real roots satisfy
$\theta_1 \theta_2 = 1$, allowing us to express them as $\theta_1 = e^{\eta}$ and $\theta_2 = e^{-\eta}$, where
\begin{equation}
    \label{eq:eta_def}
    \eta \coloneqq \operatorname{arccosh}(2\xi - 1) > 0.
\end{equation}

\begin{theorem}[Closed-form credible codes under a uniform prior]\label{thm:partition}
    Fix $\xi > 1$, and let $\eta = \operatorname{arccosh}(2\xi-1)$. Then:
    \begin{enumerate}[wide, label=\roman*), ref=(\roman*)]
        \item \textbf{Existence and uniqueness.} For every integer $L \geq 1$, the reduced game has a unique $L$-bin partition equilibrium.\label{thm:partition:1}
        \item \textbf{Closed-form thresholds.} The equilibrium thresholds are given by\label{thm:partition:2}
              \begin{equation}\label{eq:thresh_closed}
                  \tau_\ell= \frac{\sinh\left(\eta\left(\ell-\frac{L}{2}\right)\right)}{\sinh\left(\eta\frac{L}{2}\right)},
                  \qquad \ell = 0,1,\dots, L.
              \end{equation}
        \item \textbf{Symmetry.} The partition is origin-symmetric: $\tau_\ell = -\tau_{L-\ell}$ for all $\ell$.\label{thm:partition:3}
    \end{enumerate}
\end{theorem}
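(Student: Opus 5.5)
By Proposition~\ref{prop:indifference}, an $L$-bin partition is an equilibrium exactly when its thresholds solve the two-point boundary value problem given by the recurrence \eqref{eq:recurrence} with $\tau_0=-1$ and $\tau_L=1$. The plan is therefore to solve this linear boundary value problem explicitly and then check that its solution is an admissible partition.

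\textbf{General solution.} The characteristic equation has the distinct roots $e^{\pm\eta}$ with $\eta=\operatorname{arccosh}(2\xi-1)>0$. Every solution of \eqref{eq:recurrence} is therefore of the form $\tau_\ell = A e^{\eta\ell}+Be^{-\eta\ell}$. It is more convenient to recenter at $\ell=L/2$ and write
\[
\tau_\ell = C\cosh\bigl(\eta(\ell-\tfrac L2)\bigr)+D\sinh\bigl(\eta(\ell-\tfrac L2)\bigr),
\]
which spans the same two-dimensional space. One can also check directly that each summand satisfies the recurrence, using $\cosh(u+\eta)+\cosh(u-\eta)=2\cosh\eta\,\cosh u$ and the identical identity for $\sinh$, together with $2\cosh\eta=4\xi-2$.

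\textbf{Boundary conditions.} Imposing $\tau_0=-1$ and $\tau_L=1$ gives
\[
C\cosh(\eta L/2)-D\sinh(\eta L/2)=-1,\qquad C\cosh(\eta L/2)+D\sinh(\eta L/2)=1.
\]
Adding the two equations yields $C=0$, since $\cosh>0$. Subtracting them yields $D=1/\sinh(\eta L/2)$, which is well defined because $\eta L>0$. This is the unique solution of the boundary value problem and it is exactly \eqref{eq:thresh_closed}, establishing~\ref{thm:partition:2}.

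\textbf{Symmetry.} Item~\ref{thm:partition:3} is then immediate because $\sinh$ is odd: replacing $\ell$ by $L-\ell$ flips the sign of $\ell-L/2$.

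\textbf{Existence and uniqueness.} The main point to verify is that this sequence is a genuine partition, i.e.\ $-1=\tau_0<\tau_1<\cdots<\tau_L=1$. The endpoint values hold by construction. Strict monotonicity follows because $\sinh$ is strictly increasing and $\eta>0$, so $\tau_\ell$ is strictly increasing in $\ell$. Hence the thresholds define $L$ nondegenerate bins. The reconstruction levels \eqref{eq:bin_midpoint} are then strictly ordered, so the sufficiency direction of Proposition~\ref{prop:indifference} applies and the partition is an equilibrium. Uniqueness follows from the necessity direction: any $L$-bin equilibrium must satisfy the same boundary value problem, whose solution is unique because the $2\times2$ system above has determinant $2\cosh(\eta L/2)\sinh(\eta L/2)\neq 0$.

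The case $L=1$ is degenerate: there are no interior thresholds, the recurrence imposes nothing, and the single babbling bin is always an equilibrium. This is consistent with the formula, which gives $\tau_0=-1$ and $\tau_1=1$.
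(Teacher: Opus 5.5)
Your proposal is correct and follows essentially the same route as the paper: solve the two-point boundary value problem supplied by Proposition~\ref{prop:indifference}, get strict monotonicity (hence existence via sufficiency) from $\sinh$ being strictly increasing, and get uniqueness from the uniquely determined boundary constants. The only difference is cosmetic. Your recentered $\cosh/\sinh$ basis yields $C=0$ and $D=1/\sinh(\eta L/2)$ directly, whereas the paper solves in the exponential basis and then simplifies with sum-to-product and double-angle identities.
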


\begin{proof}
    See Appendix~\ref{app:proof_partition}.
\end{proof}

\Cref{thm:partition} shows that thresholds cluster geometrically near the neutral opinion: only near-neutral agents retain a graded public vocabulary, while all committed opinions on each side pool into a single extreme stance.

\definecolor{myred}{HTML}{b2182b}
\definecolor{myblue}{HTML}{2166ac}
\definecolor{myorange}{HTML}{e08214}
\definecolor{guidegray}{gray}{0.55}
\begin{figure}[t]
    \centering
    \begin{subfigure}[t]{0.48\linewidth}
        \centering
        \begin{tikzpicture}
            \begin{axis}[
                    scale only axis,
                    width=4.2cm,
                    height=3.6cm,
                    axis lines=middle,
                    xlabel={$x$},
                    ylabel={$s(x)=\operatorname{sat}(\xi x)$},
                    xmin=-1.1, xmax=1.1,
                    ymin=-1.1, ymax=1.1,
                    xtick={-1, 1},
                    ytick={-1, 1},
                    xticklabel style={font=\scriptsize},
                    yticklabel style={font=\scriptsize, anchor=east},
                    label style={font=\small},
                    xlabel style={at={(ticklabel* cs:1)}, anchor=south east, yshift=0.05cm, xshift=-0.05cm},
                    ylabel style={at={(ticklabel* cs:1)}, anchor=south,yshift=-0.05cm},
                    clip=false,
                    legend style={draw=black!50, thin, rounded corners=2pt, at={(rel axis cs:1.02,0.05)}, anchor=south east, font=\scriptsize, legend cell align=left}
                ]

                \addplot[very thick, color=myred] coordinates {
                        (-1, -1) (-0.909, -1) (0.909, 1) (1, 1)
                    };
                \addlegendentry{$\xi = 1.1$}

                \addplot[very thick, color=myblue, dash pattern=on 5pt off 2.5pt, line cap=round] coordinates {
                        (-1, -1) (-0.333, -1) (0.333, 1) (1, 1)
                    };
                \addlegendentry{$\xi = 3.0$}

                \node[anchor=north west, font=\scriptsize, inner sep=2pt] at (axis cs:0,0) {0};
            \end{axis}
        \end{tikzpicture}
        \caption{Naive Receivers}
        \label{subfig:naive_saturation}
    \end{subfigure}
    \hfill
    \begin{subfigure}[t]{0.48\linewidth}
        \centering
        \begin{tikzpicture}
            \begin{axis}[
                    scale only axis,
                    width=4.2cm,
                    height=3.6cm,
                    axis lines=middle,
                    xlabel={$x$},
                    ylabel={$s(x)=q(x)$},
                    xmin=-1.1, xmax=1.1,
                    ymin=-1.1, ymax=1.1,
                    xtick={-1, 1},
                    ytick={-1, 1},
                    xticklabel style={font=\scriptsize},
                    yticklabel style={font=\scriptsize, anchor=east},
                    label style={font=\small},
                    xlabel style={at={(ticklabel* cs:1)}, anchor=south east, yshift=0.05cm, xshift=-0.05cm},
                    ylabel style={at={(ticklabel* cs:1)}, anchor=south,yshift=-0.05cm},
                    clip=false
                ]

                \addplot[very thick, color=myblue, dash pattern=on 5pt off 2.5pt, line cap=round] coordinates {
                        (-1.000, -0.550) (-0.101, -0.550) (-0.101, -0.055)
                        (-0.009, -0.055) (-0.009, 0.000) (0.009, 0.000)
                        (0.009, 0.055) (0.101, 0.055) (0.101, 0.550) (1.000, 0.550)
                    };

                \addplot[very thick, color=myred] coordinates {
                        (-1.000, -0.737) (-0.475, -0.737) (-0.475, -0.307)
                        (-0.140, -0.307) (-0.140, 0.000) (0.140, 0.000)
                        (0.140, 0.307) (0.475, 0.307) (0.475, 0.737) (1.000, 0.737)
                    };


                \draw[thick, black] (axis cs:0.140, -0.03) -- (axis cs:0.140, 0.03);
                \draw[thick, black] (axis cs:0.475, -0.03) -- (axis cs:0.475, 0.03);

                \draw[thick, black] (axis cs:-0.015, 0.307) -- (axis cs:0.015, 0.307);

                \node[anchor=north, font=\scriptsize, inner sep=2pt, yshift=-1pt] at (axis cs:0.140, 0) {$\tau_{\ell-\!1}$};
                \node[anchor=north, font=\scriptsize, inner sep=2pt, yshift=-1pt] at (axis cs:0.475, 0) {$\tau_{\ell}$};

                \node[anchor=east, font=\scriptsize, text=black, inner sep=2pt, xshift=-1pt] at (axis cs:0, 0.307) {$r_{\ell}$};

                \draw[dotted, guidegray, thin] (axis cs:0.02, 0.307) -- (axis cs:0.140, 0.307);
                \draw[dotted, guidegray, thin] (axis cs:0.475, 0.05) -- (axis cs:0.475, 0.307);

                \draw[dotted, black, thick] (axis cs:0.3075, 0) -- (axis cs:0.3075, 0.307);
                \draw[thick, black] (axis cs:0.3075, 0.03) -- (axis cs:0.3075, -0.03);
                \node[circle, fill=black, inner sep=1pt] at (axis cs:0.3075, 0.307) {};


                \draw[decorate, decoration={brace, amplitude=3pt, mirror}, thick, black]
                (axis cs:0.142, -0.22) -- (axis cs:0.473, -0.22)
                node[midway, below=2pt, font=\scriptsize] {$\mathcal{B}_\ell$};

                \node[circle, fill=myred, inner sep=1.2pt] at (axis cs:0.140, 0.307) {};
                \node[circle, draw=myred, fill=white, thick, inner sep=1.2pt] at (axis cs:0.475, 0.307) {};
            \end{axis}
        \end{tikzpicture}
        \caption{Bayesian Receivers}
        \label{subfig:bayesian_quantization}
    \end{subfigure}
    \caption{Signaling maps induced by the exaggeration factor $\xi$. \textbf{(a)} Naive receivers take messages at face value, yielding the saturated map $s(x)=\operatorname{sat}(\xi x)$. \textbf{(b)} Bayesian receivers account for the sender's exaggeration incentive, reducing credible communication to an interval quantizer $q(x)$ ($L=5$ bins); annotations mark a generic bin $\mathcal{B}_\ell=[\tau_{\ell-1},\tau_\ell)$ and its reconstruction level $r_\ell$.}
    \label{fig:scalar_signaling_maps}
\end{figure}

\subsection{The Induced Quantizer Mapping}
\label{subsec:induced_quantizer}
Under the $L_i$-bin partition equilibrium characterized in \Cref{thm:partition}, the communication mechanism of agent $i$ reduces to a scalar quantizer $q_i \coloneqq \mu_i \circ \sigma_i$. Let $\{\tau_{i,\ell}\}_{\ell=0}^{L_i}$ denote the thresholds of agent $i$'s partition, and let
\[
    r_{i,\ell}\coloneqq \frac{\tau_{i,\ell-1}+\tau_{i,\ell}}{2},
    \qquad \ell=1,\dots,L_i,
\]
be the corresponding reconstruction levels. We define the induced quantizer map $q_i:[-1,1]\to[-1,1]$ by
\begin{equation}
    \label{eq:qi_def}
    q_i(x)\coloneqq r_{i,\ell}
    \quad \text{if } x\in(\tau_{i,\ell-1},\tau_{i,\ell}),
    \qquad \ell=1,\dots,L_i,
\end{equation}
extending it to the boundary points by setting $q_i(-1)=r_{i,1}$, $q_i(1)=r_{i,L_i}$, and choosing an arbitrary deterministic tie-breaking rule for the interior thresholds: $q_i(\tau_{i,\ell})\in\{r_{i,\ell},r_{i,\ell+1}\}$, for $\ell=1,\dots,L_i-1$.
This mapping completely encapsulates the outcome of the strategic communication game. To complete the equilibrium construction, the decoder's image over all of $\mathcal{M}$ is restricted to the set of on-path reconstruction levels $\{r_{i,1},\dots,r_{i,L_i}\}$: any off-path message is decoded as one of the on-path levels, so no deviation can generate an estimate outside the codebook.

We now characterize the signaling protocol under a high exaggeration factor ($\xi \gg 1$).

\begin{corollary}[Effective binary public messaging]
    \label{cor:binary_signaling}
    For any credible code of $\Gamma(\xi,\phi)$, every interior threshold satisfies
    \[
        |\tau_\ell|\le \frac{1}{\xi}.
    \]
    Therefore, for $L\ge3$, all non-extreme bins lie in $[-1/\xi,1/\xi]$.
    Under the uniform prior, if $\xi$ satisfies
    \begin{equation}
        \label{eq:collapse_bound}
        \xi\ge \frac{(1+\varepsilon)^2}{4\varepsilon},
        \qquad \varepsilon\in(0,1),
    \end{equation}
    then all interior thresholds lie in $(-\varepsilon,\varepsilon)$.
\end{corollary}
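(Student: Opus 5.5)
The plan is to get the general bound $|\tau_\ell|\le 1/\xi$ from the indifference condition in Remark~\ref{rem:generic_prior}, and then the uniform-prior refinement from the closed form of Theorem~\ref{thm:partition}.

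\emph{General bound.} By Theorem~\ref{thm:credibility_and_structure}, any credible code is an interval partition, with reconstruction levels $r_\ell$ that are posterior means of bins inside $[-1,1]$, so $|r_\ell|\le 1$. At each interior threshold, Remark~\ref{rem:generic_prior} gives $(r_\ell+r_{\ell+1})/2=\xi\tau_\ell$. Hence
\[
|\xi\tau_\ell|\le \tfrac12\left(|r_\ell|+|r_{\ell+1}|\right)\le 1,
\]
that is, $|\tau_\ell|\le 1/\xi$. For $L\ge3$, each non-extreme bin $\mathcal B_\ell$ with $2\le\ell\le L-1$ has both endpoints interior thresholds, so it lies in $[-1/\xi,1/\xi]$. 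This part is routine. If needed, strictness follows because $|r_\ell|<1$ when the bin has positive mass, since $\phi>0$.

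\emph{Uniform prior.} The crude bound $1/\xi<\varepsilon$ would need $\xi>1/\varepsilon$, which is weaker than \eqref{eq:collapse_bound}. So we should sharpen it using the extreme bins. Take the largest interior threshold $\tau_{L-1}$; by the symmetry in Theorem~\ref{thm:partition}, the smallest one is its mirror image. Write $a=\tau_{L-2}\ge -\tau_{L-1}$ (if $L=2$, take $a=-1$ with $\tau_1=0$, and the claim is trivial). Then $r_L=(\tau_{L-1}+1)/2$, and by symmetry or by the recurrence $r_{L-1}=(a+\tau_{L-1})/2$. Indifference gives
\[
(a+2\tau_{L-1}+1)/4=\xi\tau_{L-1}.
\]
Using $a\le \tau_{L-1}$ yields $\xi\tau_{L-1}\le (3\tau_{L-1}+1)/4$, so $\tau_{L-1}\le 1/(4\xi-3)$. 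Then $\tau_{L-1}<\varepsilon$ holds whenever $4\xi-3>1/\varepsilon$.

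It remains to check that \eqref{eq:collapse_bound} implies this. From \eqref{eq:collapse_bound}, $4\xi\ge (1+\varepsilon)^2/\varepsilon=1/\varepsilon+2+\varepsilon$, which gives $4\xi-3\ge 1/\varepsilon-1+\varepsilon$. That is not quite $>1/\varepsilon$ for $\varepsilon<1$, so the crude step $a\le\tau_{L-1}$ is too lossy.

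I therefore expect the main obstacle to be getting the exact constant, and I would instead use the closed form \eqref{eq:thresh_closed} directly. Write $\tau_{L-1}=\sinh(\eta(L/2-1))/\sinh(\eta L/2)$. This is increasing in $L$, which can be checked via the ratio and $\coth$ monotonicity, and its limit as $L\to\infty$ is $e^{-\eta}$. So $\tau_{L-1}<e^{-\eta}$ for all $L$, and the condition becomes $e^{-\eta}\le\varepsilon$, i.e.\ $e^{\eta}\ge 1/\varepsilon$.

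Since $\cosh\eta=2\xi-1$, we have $e^\eta+e^{-\eta}=4\xi-2$, and $e^\eta=\theta_1$ is the larger root of $\theta^2-(4\xi-2)\theta+1$. The map $\theta\mapsto\theta+1/\theta$ is increasing for $\theta>1$, so $e^\eta\ge1/\varepsilon$ is equivalent to $4\xi-2\ge 1/\varepsilon+\varepsilon$, i.e.\ $\xi\ge (1+\varepsilon)^2/(4\varepsilon)$. This is exactly \eqref{eq:collapse_bound}.

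Combining the strict inequality $\tau_{L-1}<e^{-\eta}\le\varepsilon$ with the symmetry of Theorem~\ref{thm:partition}, all interior thresholds lie in $(-\varepsilon,\varepsilon)$. The delicate steps are proving monotonicity in $L$ and the strictness of the limit bound. I would handle both by writing $\tau_{L-1}=e^{-\eta}(1-e^{-\eta(L-2)})/(1-e^{-\eta L})<e^{-\eta}$, which holds for every finite $L\ge2$.
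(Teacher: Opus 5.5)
Your proposal is correct and follows the paper's proof. The general bound comes from the indifference relation $(r_\ell+r_{\ell+1})/2=\xi\tau_\ell$ with $|r_\ell|\le 1$, and the uniform-prior claim comes from $\tau_{L-1}=e^{-\eta}(1-e^{-\eta(L-2)})/(1-e^{-\eta L})<e^{-\eta}$ together with $e^{\eta}+e^{-\eta}=4\xi-2$; your $\theta+1/\theta$ monotonicity step is the paper's $\cosh$ monotonicity in another form, and the abandoned detour through $\tau_{L-2}\le\tau_{L-1}$ and the monotonicity-in-$L$ remark can simply be dropped.
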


\begin{proof}
    See Appendix~\ref{app:proof_binary_signaling}.

\end{proof}

Strong network-induced exaggeration thus collapses public communication to effectively binary messaging, even as the underlying private opinion remains continuous.

\section{Strategic-CODA Dynamics: Closed-Loop Analysis}
\label{sec:dynamics}

We now characterize the properties of the closed-loop system under the induced signaling maps.
The preceding analysis characterizes the family of $L$-bin partition equilibria; but selecting $L$ is an equilibrium-selection/channel-resolution issue, treated as fixed in the repeated dynamics. This is formalized in the following.


\begin{standassumption}[Stationary strategic quantizer]
    \label{asm:stationary_partition}
    For each agent $i\in\mathcal V$, fix an integer $L_i\ge1$ and select the corresponding $L_i$-bin uniform-prior credible code characterized by Theorem~\ref{thm:partition}. In the repeated model, agent $i$ reuses this selected quantizer $q_i$ in every period.
\end{standassumption}


\begin{figure*}[t!]
    \centering
    \includegraphics[width=0.85\textwidth]{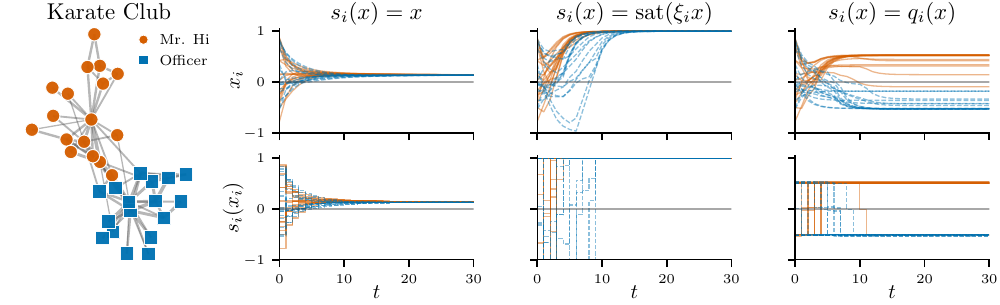}
    \caption{Private opinions $x_i(t)$ (top) and public signals $s_i(x_i(t))$ (bottom) on Zachary's Karate Club network \cite{zachary1977information} ($\alpha_i=0.5$, $x_i(0) \sim \mathcal{U}[-1,1]$, $T=30$, $L=5$). Trajectories are colored by the ``Mr. Hi'' (orange) and ``Officer'' (blue) factions. Varying the signaling protocol $s_i$ yields three distinct configurations: interior consensus (linear), boundary consensus (saturated), or polarized clustering (Bayesian).}
    \label{fig:section7_karate_private_public_signature}
\end{figure*}

\begin{remark}[Selection of the bin count]
    \label{rem:L_selection}
    Theorem~\ref{thm:partition} yields a credible code for \emph{every} $L\ge1$; Assumption~\ref{asm:stationary_partition} selects which one the population coordinates on. This multiplicity is structural: unlike the constant-bias model of \cite{crawford1982strategic}, where misalignment bounds the number of bins \cite{kazikli2022signaling}, the outward bias $(\xi_i-1)x$ vanishes at the neutral opinion, so no cardinality is ruled out \cite{gordon2010on}. We view $L_i$ as the resolution of agent $i$'s public channel (e.g., a rating scale), fixed on the slow time scale together with the convention itself; no subsequent result depends on this selection, since even as $L\to\infty$ the code remains coarse, with outermost interior thresholds at $\pm e^{-\eta}$.
\end{remark}

Substituting $\mu_j(m_j(t))=q_j(x_j(t))$ into \eqref{eq:optimal_update} gives the strategic-CODA dynamics
\begin{equation}
    \label{eq:closed_loop_dynamics_coda}
    x_i(t+1)
    =
    \alpha_i x_i(t)
    +(1-\alpha_i)\sum_{j\in\mathcal N_i^{\mathrm S}}w_{ij}q_j(x_j(t)).
\end{equation}
Let $\bm q(\bm x)\coloneqq (q_1(x_1),\dots,q_N(x_N))^\top$. Then
\begin{equation}
    \label{eq:vector_closed_loop_coda}
    \bm x(t+1)
    =
    \bm D_\alpha\bm x(t)
    +(\bm I-\bm D_\alpha)\bm W\bm q(\bm x(t))
    \eqqcolon
    \bm f(\bm x(t)).
\end{equation}

\begin{lemma}[Quantizer monotonicity]
    \label{lem:qi_properties}
    For every $i\in\mathcal V$, the quantizer $q_i:[-1,1]\to[-1,1]$ is bounded and nondecreasing.
\end{lemma}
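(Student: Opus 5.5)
The plan is to read both properties directly off the definition \eqref{eq:qi_def} together with the threshold structure established in Theorem~\ref{thm:partition}; no dynamical argument is needed.

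\emph{Boundedness.} By construction $q_i$ takes only the finitely many values $r_{i,1},\dots,r_{i,L_i}$. This holds on the open bins, at the endpoints $\pm1$, and at the interior thresholds under the chosen tie-breaking rule. Each level $r_{i,\ell}=(\tau_{i,\ell-1}+\tau_{i,\ell})/2$ is a midpoint of two points of $[-1,1]$, since $\tau_{i,0}=-1$ and $\tau_{i,L_i}=1$. Hence $r_{i,\ell}\in[-1,1]$ and $q_i$ maps into $[-1,1]$. In fact $|q_i(x)|\le \max_\ell |r_{i,\ell}|<1$ whenever $L_i\ge2$, which may be useful later for excluding extremism.

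\emph{Monotonicity.} The key input is that the thresholds are strictly increasing, $\tau_{i,0}<\tau_{i,1}<\dots<\tau_{i,L_i}$. This follows from \eqref{eq:thresh_closed}: $u\mapsto\sinh(\eta u)$ is strictly increasing because $\eta>0$, and $\sinh(\eta L_i/2)>0$. Strict ordering of the thresholds gives
\[
r_{i,\ell+1}-r_{i,\ell}=\tfrac12(\tau_{i,\ell+1}-\tau_{i,\ell-1})>0,
\]
so the reconstruction levels are strictly increasing in $\ell$. Now take $x<x'$ in $[-1,1]$ and let $\ell(x)$ be the bin index assigned to $x$ by \eqref{eq:qi_def}, including the boundary and tie-breaking conventions.

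The one point needing care is that this assignment is nondecreasing even at thresholds. For any tie-breaking choice $q_i(\tau_{i,\ell})\in\{r_{i,\ell},r_{i,\ell+1}\}$, every point strictly left of $\tau_{i,\ell}$ lies in a bin with index at most $\ell$. Every point strictly right of it lies in a bin with index at least $\ell+1$. So in all cases $\ell(x)\le\ell(x')$, and monotonicity of $r_{i,\ell}$ in $\ell$ gives $q_i(x)\le q_i(x')$.

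The case $L_i=1$ is trivial, since then $q_i\equiv r_{i,1}=0$. I expect no real obstacle here. The only subtlety is handling the arbitrary tie-breaking at thresholds, which the index argument above covers.
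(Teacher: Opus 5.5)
Your proof is correct and follows the same route as the paper. Boundedness comes from the reconstruction levels being midpoints of thresholds in $[-1,1]$, and monotonicity comes from the strict ordering of thresholds and levels together with the fixed tie-breaking rule; your bin-index argument at the thresholds just spells out what the paper's one-line proof leaves implicit.
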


\begin{proof}
    Boundedness follows from the fact that reconstruction levels are midpoints of some interval in $[-1,1]$. Monotonicity follows from the ordering of thresholds and reconstruction levels, with a fixed tie-breaking rule at thresholds.
\end{proof}
\begin{proposition}[Forward invariance and order preservation]
    \label{prop:closed_loop_order}
    For any deterministic tie-breaking rule, the map $\bm{f}:\mathcal X\to\mathcal X$ in \eqref{eq:vector_closed_loop_coda} is well-defined, leaves $\mathcal X$ forward-invariant, and is order-preserving.
\end{proposition}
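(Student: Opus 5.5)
The plan is to verify the three properties directly from the componentwise form \eqref{eq:closed_loop_dynamics_coda}, using Lemma~\ref{lem:qi_properties} as the only structural input on the quantizers.

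\emph{Well-definedness.} For a fixed deterministic tie-breaking rule at the interior thresholds and the stated convention at $\pm1$, each $q_j$ is a single-valued function on $[-1,1]$ by \eqref{eq:qi_def}. Hence $\bm q(\bm x)$ is uniquely determined for every $\bm x\in\mathcal X$, and $\bm f$ is an explicit affine combination of $\bm x$ and $\bm q(\bm x)$. This gives a well-defined map $\bm f:\mathcal X\to\mathbb R^N$.

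\emph{Forward invariance.} For each $i$, I will note three facts:
\begin{itemize}
\item the coefficients $\alpha_i$ and $(1-\alpha_i)w_{ij}$ are nonnegative;
\item they sum to $\alpha_i+(1-\alpha_i)\sum_j w_{ij}=1$, because $\bm W$ is row-stochastic;
\item each $q_j(x_j)\in[-1,1]$ by Lemma~\ref{lem:qi_properties}, since every reconstruction level is the midpoint of a subinterval of $[-1,1]$, and $x_i\in[-1,1]$ by assumption.
\end{itemize}
Consequently $f_i(\bm x)$ is a convex combination of points of $[-1,1]$, so $f_i(\bm x)\in[-1,1]$. Hence $\bm f(\mathcal X)\subseteq\mathcal X$.

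\emph{Order preservation.} Take $\bm x\le\bm x'$ componentwise. Since each $q_j$ is nondecreasing (Lemma~\ref{lem:qi_properties}), $q_j(x_j)\le q_j(x'_j)$ for every $j$. Multiplying by the nonnegative weights gives
\[
f_i(\bm x')-f_i(\bm x)=\alpha_i(x'_i-x_i)+(1-\alpha_i)\sum_j w_{ij}\bigl(q_j(x'_j)-q_j(x_j)\bigr)\ge0
\]
for every $i$, that is, $\bm f(\bm x)\le\bm f(\bm x')$.

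The only delicate point is that monotonicity of $q_j$ must hold exactly at the thresholds, whatever tie-breaking choice is made. There $q_j(\tau_{j,\ell})\in\{r_{j,\ell},r_{j,\ell+1}\}$. Points just to the left of $\tau_{j,\ell}$ map to $r_{j,\ell}$ and points just to the right map to $r_{j,\ell+1}$. Since $r_{j,\ell}<r_{j,\ell+1}$, either choice keeps $q_j$ nondecreasing, and the endpoint conventions $q_j(\pm1)$ are consistent in the same way. Lemma~\ref{lem:qi_properties} already covers this, so I expect no real obstacle.
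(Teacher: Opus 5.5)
Your proposal is correct and follows essentially the same route as the paper: forward invariance comes from writing $f_i(\bm x)$ as a convex combination of points in $[-1,1]$, and order preservation comes from the nondecreasing quantizers of Lemma~\ref{lem:qi_properties} together with the nonnegative weights $\alpha_i$ and $(1-\alpha_i)w_{ij}$. Your explicit checks of single-valuedness and of monotonicity under either tie-breaking choice at the thresholds make precise what the paper leaves implicit.
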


\begin{proof}
    Let $\bm{x}\in\mathcal X$. Since $x_i\in[-1,1]$ and $q_j(x_j)\in[-1,1]$ for every $j$, the quantity $\sum_j w_{ij}q_j(x_j)$ is a convex combination of points in $[-1,1]$, hence also belongs to $[-1,1]$. Therefore, $f_i(\bm{x})= \alpha_i x_i + (1-\alpha_i)\sum_j w_{ij}q_j(x_j) \in [-1,1]$,
    because $0\leq\alpha_i<1$. This proves forward invariance.

    Now let $\bm{u},\bm{v}\in\mathcal X$ satisfy $\bm{u}\le \bm{v}$. By Lemma~\ref{lem:qi_properties}, $q_j(u_j)\le q_j(v_j)$ for every $j$. Since $w_{ij}\ge 0$, it follows that $\sum_j w_{ij}q_j(u_j)\le \sum_j w_{ij}q_j(v_j)$.
    Multiplying by $1-\alpha_i>0$ and adding $\alpha_i u_i\le \alpha_i v_i$ yields $f_i(\bm{u})\le f_i(\bm{v})$ for every $i$. Hence $\bm{f}$ is order-preserving.
\end{proof}

\begin{corollary}[Existence of steady states]
    \label{cor:fixed_point_existence}
    The set of fixed points (steady states) $\mathcal{F}\coloneqq \{\bm{x}\in\mathcal{X}:\bm{f}(\bm{x})=\bm{x}\}$ is nonempty.
\end{corollary}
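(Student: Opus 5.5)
The natural route is Tarski's fixed-point theorem: $\bm f$ is monotone but discontinuous, so Brouwer is unavailable. We work in the complete lattice $\mathcal X=[-1,1]^N$ with the componentwise order, and use the two properties from \Cref{prop:closed_loop_order}. First, $\bm f$ maps $\mathcal X$ into itself. Second, $\bm f$ is order-preserving for the fixed deterministic tie-breaking rule. The box $\mathcal X$ is a complete lattice because every subset $S\subseteq\mathcal X$ has a componentwise supremum and infimum, obtained coordinatewise as sup and inf in $[-1,1]$, and these remain in $\mathcal X$. Tarski's theorem then states that an order-preserving self-map of a complete lattice has a nonempty fixed-point set, which is itself a complete lattice. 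Hence $\mathcal F\neq\emptyset$, and in fact $\mathcal F$ has a least and a greatest element.

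Tarski's argument does not require continuity, which matters here because each $q_j$ jumps at its thresholds. For a self-contained proof, I would reproduce the short argument. Define
\[
\mathcal S\coloneqq\{\bm x\in\mathcal X:\bm x\le \bm f(\bm x)\}.
\]
This set is nonempty because $-\bm 1\in\mathcal S$, by forward invariance. Let $\bm x^\star\coloneqq\sup\mathcal S$, taken componentwise. For every $\bm x\in\mathcal S$ we have $\bm x\le\bm x^\star$, so monotonicity gives $\bm x\le\bm f(\bm x)\le\bm f(\bm x^\star)$. Thus $\bm f(\bm x^\star)$ is an upper bound of $\mathcal S$, and therefore $\bm x^\star\le\bm f(\bm x^\star)$. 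Applying monotonicity again yields $\bm f(\bm x^\star)\le\bm f(\bm f(\bm x^\star))$, so $\bm f(\bm x^\star)\in\mathcal S$. This gives $\bm f(\bm x^\star)\le\bm x^\star$. Combining the two inequalities, $\bm x^\star\in\mathcal F$.

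The only step needing care is confirming that the tie-breaking at thresholds keeps $q_j$ genuinely nondecreasing as a single-valued function. This is exactly what \Cref{lem:qi_properties} provides, so there is no real obstacle.

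One could also exhibit fixed points explicitly, for example via the monotone iterates from $\pm\bm 1$, or via consensus states at common reconstruction levels. These constructions are not needed for nonemptiness.
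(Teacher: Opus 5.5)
Your proof is correct and is essentially the paper's own: Tarski's theorem on the complete lattice $(\mathcal X,\le)$, using the forward invariance and order preservation from Proposition~\ref{prop:closed_loop_order}, and your inlined $\sup\{\bm x\in\mathcal X:\bm x\le\bm f(\bm x)\}$ argument is a valid self-contained proof of that theorem. One caution, which concerns only your closing aside: because $\bm f$ is discontinuous, the limit of the monotone iterates from $\pm\bm 1$ need not be a fixed point. For example, take $N=2$, $w_{12}=w_{21}=1$, $\alpha_1=\alpha_2=1/2$, $L_1=2$ with tie-break $q_1(0)=1/2$, and $L_2=1$; the iterates from $-\bm 1$ increase to $(0,-1/2)$, but $\bm f(0,-1/2)=(0,0)$. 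So that route would need further iteration past the limit, or a tie-breaking rule matched to the direction of approach.
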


\begin{proof}
    The partially ordered set $(\mathcal{X},\le)$ endowed with the product order is a complete lattice. By Proposition~\ref{prop:closed_loop_order}, $\bm{f}$ is an order-preserving self-map on $\mathcal{X}$. Tarski's fixed-point theorem \cite{tarski1955lattice} therefore yields a fixed point of $\bm{f}$.
\end{proof}

\begin{theorem}[Fixed-point identity and interiority]
    \label{thm:fixed_points}
    Let $\mathcal F$ be the fixed-point set of \eqref{eq:vector_closed_loop_coda}. Then:
    \begin{enumerate}[wide, label=\roman*), ref=(\roman*)]
        \item $\bm x^*\in\mathcal F$ if and only if\label{thm:fixed_points:1}
              \begin{equation}\label{eq:fixed_point_relation}
                  \bm x^*=\bm W\bm q(\bm x^*).
              \end{equation}
        \item $\mathcal F\subset(-1,1)^N$. Specifically, the extreme consensus states $\bm 1$ and $-\bm 1$ are not fixed points.\label{thm:fixed_points:2}
    \end{enumerate}
\end{theorem}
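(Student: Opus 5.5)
For part~\ref{thm:fixed_points:1}, I will write out the fixed-point equation componentwise. By \eqref{eq:vector_closed_loop_coda}, $\bm x^*=\bm f(\bm x^*)$ is equivalent to $\bm x^*=\bm D_\alpha\bm x^*+(\bm I-\bm D_\alpha)\bm W\bm q(\bm x^*)$. Rearranging gives $(\bm I-\bm D_\alpha)\bm x^*=(\bm I-\bm D_\alpha)\bm W\bm q(\bm x^*)$. Since every $\alpha_i\in[0,1)$, the diagonal matrix $\bm I-\bm D_\alpha$ has strictly positive diagonal entries and is therefore invertible. Cancelling it yields \eqref{eq:fixed_point_relation}, and every step is reversible, so the equivalence holds in both directions.

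For part~\ref{thm:fixed_points:2}, I will use \eqref{eq:fixed_point_relation} together with a bound on the reconstruction levels. The key observation is that every reconstruction level lies strictly inside $(-1,1)$. Each $r_{i,\ell}=(\tau_{i,\ell-1}+\tau_{i,\ell})/2$ is the midpoint of a nondegenerate interval $[\tau_{i,\ell-1},\tau_{i,\ell}]\subseteq[-1,1]$, because the thresholds are strictly increasing. The largest level is $r_{i,L_i}=(\tau_{i,L_i-1}+1)/2$, and since $\tau_{i,L_i-1}<1$ this is strictly less than $1$. The same argument applied at the lower end gives $r_{i,1}>-1$. The case $L_i=1$ is included, with $r=0$.

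Set $\rho\coloneqq\max_{i,\ell}|r_{i,\ell}|<1$. This maximum is over finitely many levels, so $\rho<1$. Then $|q_j(x_j)|\le\rho$ for every $j$ and every $x_j$, regardless of the tie-breaking rule, since ties are broken among on-path levels by \eqref{eq:qi_def}.

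Now let $\bm x^*\in\mathcal F$. By \eqref{eq:fixed_point_relation}, $x_i^*=\sum_j w_{ij}q_j(x_j^*)$ is a convex combination, because $\bm W$ is row-stochastic. Hence $|x_i^*|\le\rho<1$ for every $i$, which gives $\mathcal F\subset[-\rho,\rho]^N\subset(-1,1)^N$. In particular, $\pm\bm 1\notin\mathcal F$.

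I expect no real obstacle. The only point that needs care is the strict inequality $r_{i,L_i}<1$, which requires the last interior threshold to satisfy $\tau_{i,L_i-1}<1$; this follows from the strict ordering of thresholds in Theorem~\ref{thm:partition}. I will also check that tie-breaking and the boundary conventions $q_i(\pm1)$ only ever select on-path levels, which is immediate from \eqref{eq:qi_def}.
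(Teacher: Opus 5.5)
Your proposal is correct and follows the paper's proof essentially step for step: part (i) cancels the invertible diagonal matrix $\bm I-\bm D_\alpha$, and part (ii) bounds all reconstruction levels strictly inside $(-1,1)$ and then applies row-stochasticity of $\bm W$ to the fixed-point identity. The only differences are cosmetic: you use a symmetric bound $\rho$ where the paper uses the pair $m,M$, and you spell out why $r_{i,L_i}=(\tau_{i,L_i-1}+1)/2<1$, which the paper simply asserts.
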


\begin{proof}
    \ref{thm:fixed_points:1} The fixed-point equation is $\bm{x}^*=\bm{D}_{\alpha}\bm{x}^* +(\bm{I}-\bm{D}_{\alpha})\bm{W}\bm{q}(\bm {x}^*)$.
    Rearranging gives
    $(\bm I-\bm D_\alpha)\,\bm x^*
        =(\bm I-\bm D_\alpha)\bm W\bm q(\bm x^*)$. Since $\alpha_i<1$ for all $i \in \mathcal{V}$, then $(\bm I-\bm D_\alpha)$ is invertible, which proves \eqref{eq:fixed_point_relation}.

    \ref{thm:fixed_points:2} Each $q_j$ takes values strictly inside $[-1,1]$: its minimum reconstruction level is greater than $-1$ and its maximum reconstruction level is less than $1$. Hence there are constants $m>-1$ and $M<1$ such that $m\le q_j(x)\le M$ for all $j$ and all $x\in[-1,1]$. Using \eqref{eq:fixed_point_relation} and row-stochasticity of $\bm W$ gives $m\le x_i^*\le M$ for every $i$.
\end{proof}

\begin{remark}[Impossibility of absolute extremism]
    \label{rem:impossibility_extremism}
    In classical models, networks can settle at the extreme boundaries ($\pm 1$) if the initial states allow. Under Bayesian communication, credibility compresses the outer reconstruction levels strictly inside the opinion space, so polarization and clustering can persist at steady state, but absolute extremism is not possible.
\end{remark}

For fixed signaling functions $\bm{q}(x)$, Theorem~\ref{thm:fixed_points}\ref{thm:fixed_points:1} also shows that the steady states are independent of the inertia profile ($\bm \alpha$), which shapes only the trajectories. We close this section by showing that the trajectories are governed entirely by the public channel. Along a trajectory of \eqref{eq:vector_closed_loop_coda}, let $\bm m(t)\coloneqq \bm q(\bm x(t))$ denote the \emph{public profile} and $\bm c(t)\coloneqq \bm W\bm m(t)$ the aggregate public signal it induces across the network.

\begin{proposition}[Convergence-switching dichotomy]
    \label{prop:dichotomy}
    Along any trajectory of \eqref{eq:vector_closed_loop_coda}:
    \begin{enumerate}[wide, label=\roman*), ref=(\roman*)]
        \item $\bm x(t)$ converges if and only if $\bm c(t)$ is eventually constant. In that case, if $\bm c(t)=\bar{\bm c}$ for all $t\ge T$, the convergence is geometric at each agent's inertia rate,\label{prop:dichotomy:1}
              \begin{equation}
                  \label{eq:conv-switch}
                  x_i(t)-\bar c_i=\alpha_i^{\,t-T}\bigl(x_i(T)-\bar c_i\bigr),
                  \qquad t\ge T,
              \end{equation}
              and the limit $\bar{\bm c}$ satisfies the fixed-point identity \eqref{eq:fixed_point_relation} whenever no coordinate of $\bar{\bm c}$ coincides with a quantizer threshold.
        \item If $\bm f(\bm x(0))\ge\bm x(0)$ or $\bm f(\bm x(0))\le\bm x(0)$, then $\bm m(t)$ is eventually constant, and \ref{prop:dichotomy:1} applies.\label{prop:dichotomy:2}
    \end{enumerate}
\end{proposition}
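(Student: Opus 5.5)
Part (i) rests on rewriting the dynamics \eqref{eq:closed_loop_dynamics_coda} coordinatewise as
\[
x_i(t+1)-c_i(t)=\alpha_i\bigl(x_i(t)-c_i(t)\bigr), \qquad c_i(t)=\sum_j w_{ij}q_j(x_j(t)).
\]

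\emph{Sufficiency.} If $\bm c(t)=\bar{\bm c}$ for all $t\ge T$, iterating this affine recursion gives \eqref{eq:conv-switch} directly. Since $0\le\alpha_i<1$, we get $x_i(t)\to\bar c_i$ geometrically.

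\emph{Necessity.} Suppose $\bm x(t)\to\bar{\bm x}$. Then $\bm c(t)=\bigl(\bm x(t+1)-\bm D_\alpha\bm x(t)\bigr)$ premultiplied by $(\bm I-\bm D_\alpha)^{-1}$, so $\bm c(t)$ converges too. Each $\bm m(t)$ lies in the finite set $\prod_j\{r_{j,1},\dots,r_{j,L_j}\}$, so $\bm c(t)=\bm W\bm m(t)$ takes finitely many values. A convergent sequence in a finite set is eventually constant, which proves the converse.

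\emph{Fixed-point identity.} Once $\bm x(t)\to\bar{\bm c}$, suppose no $\bar c_j$ is a threshold of $q_j$. Then $q_j$ is locally constant near $\bar c_j$, so $q_j(x_j(t))=q_j(\bar c_j)$ for large $t$. Hence $\bar{\bm c}=\bm W\bm m(t)=\bm W\bm q(\bar{\bm c})$, which is \eqref{eq:fixed_point_relation}. By Theorem~\ref{thm:fixed_points}\ref{thm:fixed_points:1}, $\bar{\bm c}\in\mathcal F$.

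Part (ii) is a standard monotone-iteration argument using Proposition~\ref{prop:closed_loop_order}. Assume $\bm f(\bm x(0))\ge\bm x(0)$; the other case is symmetric. Order preservation and induction give $\bm x(t+1)=\bm f(\bm x(t))\ge\bm f(\bm x(t-1))=\bm x(t)$, so each coordinate $x_j(t)$ is nondecreasing. By Lemma~\ref{lem:qi_properties}, each $q_j(x_j(t))$ is a nondecreasing sequence in a finite set of levels. It can therefore increase only finitely many times (at most $L_j-1$), so $\bm m(t)$ is eventually constant. Then $\bm c(t)$ is eventually constant and part (i) applies.

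The main delicacy is the necessity step: $\bm c(t)$ must converge, not just $\bm x(t)$. The inversion of $\bm I-\bm D_\alpha$ handles this, and finiteness of the codebook then forces eventual constancy. A secondary point is the threshold caveat in the fixed-point identity. When $\bar c_j$ is a threshold, $x_j(t)$ may approach it from the side whose level differs from the tie-breaking value $q_j(\bar c_j)$. This is exactly why the statement excludes that case, so no extra argument is needed beyond noting it.
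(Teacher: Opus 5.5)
Your proposal is correct and follows the paper's proof essentially step for step. It uses the affine recursion $x_i(t+1)-c_i(t)=\alpha_i\bigl(x_i(t)-c_i(t)\bigr)$ for sufficiency, invertibility of $\bm I-\bm D_\alpha$ plus finiteness of the codebook for necessity, local constancy of $q_j$ away from thresholds for the fixed-point identity, and order preservation with monotone, finitely-valued $q_j(x_j(t))$ for part (ii). The only cosmetic difference is that in (ii) you bound the number of level increases by $L_j-1$, whereas the paper first invokes monotone convergence of $x_j(t)$; both are equally valid.
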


\begin{proof}
    \ref{prop:dichotomy:1} ($\Leftarrow$) If $\bm c(t)=\bar{\bm c}$ for all $t\ge T$, then \eqref{eq:vector_closed_loop_coda} gives $\bm x(t+1)-\bar{\bm c}=\bm D_\alpha(\bm x(t)-\bar{\bm c})$, and \eqref{eq:conv-switch} follows by iteration;  since $\alpha_i<1$ for all $i$, $\bm x(t)\to\bar{\bm c}$. If no coordinate of $\bar{\bm c}$ is a quantizer threshold, each $q_j$ is constant in a neighborhood of $\bar c_j$, so $\bm q(\bm x(t))=\bm q(\bar{\bm c})$ for all $t$ large enough; hence $\bar{\bm c}=\bm W\bm q(\bar{\bm c})$, which is \eqref{eq:fixed_point_relation}.

    \noindent($\Rightarrow$) If $\bm x(t)\to\bar{\bm x}$, then \eqref{eq:vector_closed_loop_coda} gives $(\bm I-\bm D_\alpha)\bm c(t)=\bm x(t+1)-\bm D_\alpha\bm x(t)\to(\bm I-\bm D_\alpha)\bar{\bm x}$, so $\bm c(t)\to\bar{\bm x}$. A convergent sequence taking values in a finite set is eventually constant.

    \ref{prop:dichotomy:2} Suppose $\bm f(\bm x(0))\ge\bm x(0)$; the decreasing case is symmetric. Since $\bm f$ is order-preserving (Proposition~\ref{prop:closed_loop_order}), induction gives $\bm x(t+1)\ge\bm x(t)$ for all $t$, so each coordinate converges by monotone boundedness. By Lemma~\ref{lem:qi_properties}, each $m_j(t)=q_j(x_j(t))$ is then nondecreasing and takes finitely many values, hence is eventually constant, and so is $\bm c(t)=\bm W\bm m(t)$.
\end{proof}

\Cref{prop:dichotomy} states that private opinions cannot remain in motion on their own: persistent fluctuations of opinions require persistent switching of public stances. Once public discourse settles, every private opinion locks onto the settled public signal geometrically, at its own inertia rate. The convergence of \eqref{eq:vector_closed_loop_coda} thus reduces to whether the public profile settles. Since $\bm f$ is order-preserving but discontinuous, classical convergence theory for monotone dynamical systems does not apply directly.

\section{Numerical Experiments}
\label{sec:simulations}

\begin{figure*}[t!]
    \centering
    \includegraphics[width=0.85\textwidth]{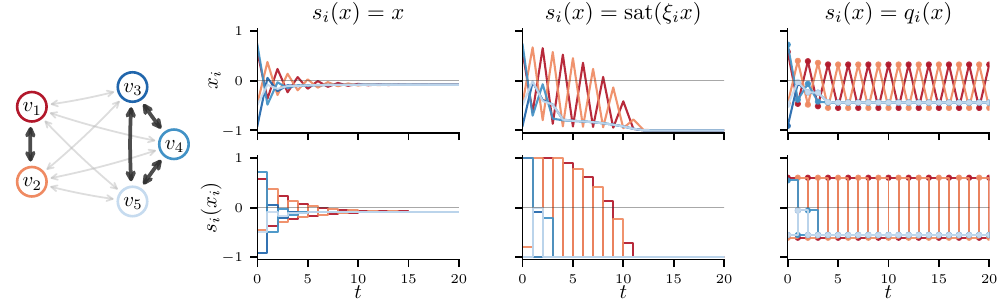}

    \caption{Private opinions and public signals on the two-group five-node network of \Cref{subsec:switching}. The graph, $W$, inertia $\alpha_i = 0.05$, initial condition, horizon $T = 20$, and Bayesian bin count $L = 5$ are fixed; only the signaling map $s_i$ changes. Linear and saturated signaling settle to fixed points within the horizon, whereas Bayesian quantized signaling enters persistent period-2 public switching.}
    \label{fig:section7_private_public_signature}
\end{figure*}

We illustrate the closed-loop dynamics under the three signaling regimes: linear ($s_i(x)=x$), saturated ($s_i(x)=\operatorname{sat}(\xi_i x)$), and quantized ($s_i(x)=q_i(x)$, the strategic-CODA regime). In each experiment, we fix the network, weights, inertia, initial state, and horizon, changing only the signaling map.


\subsection{Consensus Versus Clustering: Karate Club Network}
\label{subsec:karate}

We first consider Zachary's Karate Club network \cite{zachary1977information} as a benchmark. This network captures the friendships among 34 members of a university club that split into two factions: one centered on the instructor (``Mr. Hi'') and the other around the club administrator (``Officer''). We set uniform inertia $\alpha_i=0.5$, and draw initial opinions i.i.d. from $\mathcal{U}[-1,1]$. For the Bayesian regime, we set $L=5$. The remaining parameters are reported in the caption of Fig.~\ref{fig:section7_karate_private_public_signature}.

With aligned communication, continuous averaging eliminates the initial dispersion, and opinions reach an interior consensus. With persuasive senders and naive receivers, saturation amplifies the public signal and, for this initial condition, drives the network to a boundary consensus. The Bayesian quantizer instead maps private opinions into a set of reconstruction levels, and this coarse public channel maintains a clustered, polarized configuration. Consistent with Theorem~\ref{thm:fixed_points}\ref{thm:fixed_points:2}, the preserved clusters are strictly interior, unlike the saturated regime's boundary consensus.

\subsection{Persistent Public Switching on a Two-Group Network}
\label{subsec:switching}

The clustering effect above concerns the configuration; the strategic-CODA regime can also alter the temporal behavior. To isolate this effect, we use a five-node network example with two coupled groups $\{1,2\}$ and $\{3,4,5\}$ (see Fig.~\ref{fig:section7_private_public_signature}): each row of $\bm W$ is normalized (with $w_{ii}=0$), same-group sources split a total weight of $0.80$ uniformly, while cross-group sources split $0.20$ uniformly. All agents have inertia $\alpha_i=0.05$, and the Bayesian regime uses $L=5$ bins. The initial opinion state is $\bm{x}(0)=(0.585,-0.459,-0.911,0.730,-0.487)$ and the time horizon is $T=20$.

Figure~\ref{fig:section7_private_public_signature} separates the private state $x_i(t)$ from the public signal $s_i(x_i(t))$. The linear and saturated regimes settle to fixed points within the horizon, the latter at the boundary. Under Bayesian decoding, the discontinuities of the endogenous quantizer create repeated threshold crossings: two agents alternate between adjacent reconstruction levels and the private states inherit a period-2 orbit. This shows an example of the non-convergent (switching) part of the dichotomy in Proposition~\ref{prop:dichotomy}: the aggregate public signal never settles, so the private opinions cannot converge.

The Bayesian regime is therefore not a discretized version of saturation: the saturated map is continuous after clipping and converges, whereas the credible quantizer can sustain persistent switching of public stances. The convergent branch of the dichotomy is equally visible: re-running the same example with higher inertia ($\alpha_i=0.5$, all else unchanged) allows the public profile to settle after finitely many switches, after which the private opinions converge geometrically at their inertia rates, as in Proposition~\ref{prop:dichotomy}\ref{prop:dichotomy:1}. These experiments suggest inertia stabilizes public discourse: we conjecture that for each topology and quantizer profile there is an inertia threshold above which the public profile settles from every initial state; a complete convergence analysis is the subject of future work.

\section{Conclusion}
\label{sec:conclusion}
This paper develops a game-theoretic micro-foundation for opinion dynamics in which agents observe only the public expressions of their neighbors, rather than their private beliefs. Whereas existing opinion dynamics models fix the opinion-to-signal map in advance, here that map is an equilibrium object, shaped by the influence network itself, and the credibility of public expression becomes the central quantity for both analysis and intervention. A single public-broadcast game unifies the canonical communication laws: depending on sender incentives and on how audiences decode, the same agents produce linear exchange, saturated signaling, or quantized public actions. Because the network game decouples exactly into independent scalar cheap talk problems, the closed loop remains tractable, yielding sharp conclusions that would remain conjectures in a less structured model: network position alone generates public exaggeration, credible expression collapses to an effectively binary vocabulary under weak leverage, and opinion clusters survive strictly inside the opinion space, excluding absolute extremism. The same structure singles out channel resolution and inertia as levers for stabilizing public discourse. Future directions include a complete convergence theory for the strategic-CODA dynamics, in particular conditions under which the public profile settles; the dual roles of the network topology, which simultaneously provides the connectivity that drives agreement and generates the signaling laws; and network design, in which the graph is shaped to achieve a target exaggeration profile and, through it, a desired public vocabulary.

\appendix

\subsection{Proof of Theorem~\ref{thm:exact_reduction}}
\label{app:proof_exact_reduction}
\begin{proof}
    By Definition \ref{def:pbe}, the strategy profile tuple $(\bm{\sigma}^*, \bm{\mu}^*, \bm{\zeta}^*)$ is a PBE if and only if it satisfies Optimal Updating, Bayes Consistency, and Optimal Encoding.

    \emph{Necessity ($\Rightarrow$):} Suppose $(\bm{\sigma}^*, \bm{\mu}^*, \bm{\zeta}^*)$ is a PBE.
    First, Optimal Updating dictates that $\zeta_i^*$ must minimize agent $i$'s expected cost given the decoder profile $\bm{\mu}^*$. By Proposition \ref{prop:optimal_update}, the unique global minimum for this receiver problem is exactly \eqref{eq:optimal_update}, satisfying Condition~\ref{thm:exact_reduction:c1}.
    Second, Bayes Consistency explicitly requires that $\mu_i^*$ satisfies Bayes' rule on the equilibrium path, satisfying Condition~\ref{thm:exact_reduction:c2}.
    Third, Optimal Encoding requires $\sigma_i^*$ to minimize the ex ante expected cost given $\bm{\mu}^*$ and $\bm{\zeta}^*$. If $\beta_i = 0$, the cost does not depend on $m_i$, so any encoding policy is weakly dominant. If $\beta_i > 0$, Theorem \ref{thm:cost_reduction} and Corollary~\ref{cor:centered_scalar_target} establish that the sender's encoding problem is equivalent to minimizing the reduced objective $\mathbb{E}[(\mu_i^*(m_i) - \xi_i x_i)^2]$. Minimizing this expectation ex ante is equivalent to pointwise minimization of the squared error for almost every (a.e.) $x_i$, which is Condition~\ref{thm:exact_reduction:c3}.

    \emph{Sufficiency ($\Leftarrow$):} Suppose the strategy profile satisfies Conditions \ref{thm:exact_reduction:c1}, \ref{thm:exact_reduction:c2}, and \ref{thm:exact_reduction:c3}.
    Condition \ref{thm:exact_reduction:c1} guarantees Optimal Updating because \eqref{eq:optimal_update} is the optimal update according to Proposition \ref{prop:optimal_update}.
    Condition \ref{thm:exact_reduction:c2} directly satisfies the Bayes Consistency requirement of Definition \ref{def:pbe}.
    To verify Optimal Encoding, we confirm that no sender has a profitable unilateral deviation. For any fixed $i$, given the established $\bm{\mu}^*$ and $\bm{\zeta}^*$, agent $i$'s choice of message $m_i$ only affects their own cost through the subsequent updates $z_k^*$ of their audience members $k \in \mathcal{N}_i^{\mathrm{A}}$. By Theorem \ref{thm:cost_reduction}, any variation in agent $i$'s expected cost resulting from a deviation in $\sigma_i$ is equivalently captured by the scalar objective $\mathbb{E}[(\mu_i^*(m_i) - \xi_i x_i)^2]$. Because Condition \ref{thm:exact_reduction:c3} assumes that $\sigma_i^*(x_i)$ minimizes $(\mu_i^*(m) - \xi_i x_i)^2$ over all available messages $m \in \mathcal{M}$ (including off-path messages, for which $\mu_i^*$ is defined by the strategy profile), no alternative encoding rule $\widehat{\sigma}_i$ can strictly decrease the expected cost. If $\beta_i = 0$, the cost is invariant to $m_i$, so a strict improvement is again impossible. Therefore, Optimal Encoding holds, confirming that the strategy profile tuple is a valid PBE.
\end{proof}

\subsection{Proof of \Cref{thm:credibility_and_structure}}
\label{app:proof_thm3}
\begin{proof}
    Let $q\coloneqq\mu\circ\sigma$. By Definition~\ref{def:reduced_game} and
    Theorem~\ref{thm:exact_reduction}, Bayes consistency gives
    $q(x)=\mathbb E[x\mid\sigma(x)]$ a.s. Since $q$ is induced by $\sigma$, the
    law of iterated expectations gives $q(x)=\mathbb E[x\mid q(x)]$ a.s. Sender
    optimality requires $g(x,q(x))\le g(x,\mu(m))$ for all $m\in\mathcal M$, for
    a.e. $x$. In what follows, all pointwise statements are understood after
    redefining $q$ on a null set, without affecting expected payoffs.

    \ref{thm:credibility_and_structure:1}
    Suppose, toward contradiction, that $q(x)=x$ a.s. Let
    $\mathcal T=\{x:q(x)=x\}$ denote the truth-telling set. Since $\phi(x)>0$ on
    $[-1,1]$, we can choose $x\in(0,1/\xi)$ such that $x\in\mathcal T$, $\xi x\in
        \mathcal T$, and sender optimality holds at $x$. Under full revelation, type
    $x$ incurs the strictly positive loss $g(x,x)=(x-\xi x)^2$. By deviating to
    the message $\sigma(\xi x)$, the same type induces the estimate
    $q(\xi x)=\xi x$ and obtains loss zero. This profitable deviation contradicts
    sender optimality. Hence full revelation is impossible for $\xi>1$.

    \ref{thm:credibility_and_structure:2}
    We first show that $q$ is nondecreasing a.s. Take $x_1<x_2$ and write
    $y_j=q(x_j)$. Suppose $y_1>y_2$. Since type $x_1$ can mimic type $x_2$, and
    type $x_2$ can mimic type $x_1$, optimality gives
    $g(x_1,y_1)\le g(x_1,y_2)$ and $g(x_2,y_2)\le g(x_2,y_1)$. For $y_2<y_1$, the
    quadratic loss satisfies the single-crossing relation
    $g(x,y_1)\le g(x,y_2)$ if and only if $x\ge (y_1+y_2)/(2\xi)$. Thus
    $x_1\ge (y_1+y_2)/(2\xi)$ and $x_2\le (y_1+y_2)/(2\xi)$, which contradicts
    $x_1<x_2$. Therefore $q$ is nondecreasing, up to a null-set modification. For any value $y$ in the range of $q$, define the bin
    $\mathcal B_y\coloneqq\{x\in[-1,1]:q(x)=y\}$. Since $q$ is nondecreasing, each
    $\mathcal B_y$ is an interval, possibly degenerate. The nondegenerate bins are
    at most countable (disjoint non-singleton intervals in $[-1,1]$ each
    contain a rational number).

    It remains to rule out sets of singleton bins. Let
    $\mathcal S$ denote the union of singleton bins. On $\mathcal S$, the map $q$
    is injective; hence the posterior-mean condition $q(x)=\mathbb E[x\mid q(x)]$
    implies $q(x)=x$ a.s. on $\mathcal S$. If $\mathcal S$ had positive
    probability, then, since $\phi(x)>0$, the set
    $\mathcal A\coloneqq\mathcal S\cap\{x:q(x)=x\}$ would contain two arbitrarily
    close points on at least one side of the origin. If this occurs on $(0,1)$,
    choose $x_0,x'\in\mathcal A$ with $x_0<x'<\min\{\xi x_0,1\}$. Type $x_0$
    obtains the estimate $q(x_0)=x_0$, but by mimicking type $x'$ it obtains
    $q(x')=x'$, which is strictly closer to the target $\xi x_0$; hence
    $g(x_0,x')<g(x_0,x_0)$, contradicting optimality. A symmetric
    argument covers $(-1,0)$.
    Hence $\mathcal S$ is null.

    Hence, up to a null set, the state space is partitioned into countably
    many nondegenerate intervals. Relabeling them as
    $\{\mathcal B_\ell\}_{\ell\in\mathcal L}$, and letting $r_\ell$ be the common value
    of $q$ on $\mathcal B_\ell$, Bayes consistency gives
    $r_\ell=\mathbb E[x\mid q(x)=r_\ell]=\mathbb E[x\mid x\in\mathcal B_\ell]$.
    Therefore $q$ is constant on the elements of an interval partition,
    with posterior-mean reconstruction levels, almost surely. This is precisely
    an interval quantizer.
\end{proof}

\subsection{Proof of Theorem~\ref{thm:partition}}
\label{app:proof_partition}

\begin{proof}
    We begin with \ref{thm:partition:2}. The general solution to the difference equation~\eqref{eq:recurrence} is $\tau_\ell = C_1 e^{\eta\ell} + C_2 e^{-\eta\ell}$, where the constants $C_1,C_2$ are uniquely determined by the boundary conditions which are: $C_1 + C_2 = -1$, and $C_1 e^{\eta L} + C_2 e^{-\eta L} = 1$. The system admits a unique solution
    \[
        C_1 = \frac{1 + e^{-\eta L}}{2\sinh(\eta L)},
        \qquad
        C_2 = \frac{-(1 + e^{\eta L})}{2\sinh(\eta L)}.
    \]
    Substituting $C_1,C_2$ into the general solution yields:
    \begin{align*}
        \tau_\ell \nonumber
         & = \frac{(1+e^{-\eta L})\,e^{\eta\ell}
                 - (1+e^{\eta L})\,e^{-\eta\ell}}{2\sinh(\eta L)}    \nonumber                              \\
         & = \frac{\sinh(\eta\ell) - \sinh(\eta( L-\ell))}{\sinh(\eta L)}    \label{eq:thresh_closed_step}.
    \end{align*}
    Applying the sum-to-product identity $\sinh(x) - \sinh(y) = 2\cosh(\frac{x+y}{2})\sinh(\frac{x-y}{2})$ to the numerator and the double-angle identity $\sinh(\eta L) = 2\sinh(\eta\frac{L}{2})\cosh(\eta\frac{L}{2})$ to the denominator simplifies to~\eqref{eq:thresh_closed}.

    \ref{thm:partition:1} For existence, it remains to verify that the thresholds characterized in~\eqref{eq:thresh_closed} are strictly
    increasing. We have:
    \begin{equation*}
        \label{eq:thresh_diff}
        \tau_{\ell+1} - \tau_\ell
        =
        \frac{\sinh\!\left(\eta\left(\ell+1-\frac{L}{2}\right)\right)
            - \sinh\!\left(\eta\left(\ell-\frac{L}{2}\right)\right)}
        {\sinh\!\left(\eta\frac{L}{2}\right)}.
    \end{equation*}
    Since $\sinh(x)$ is strictly increasing in $x$, $\eta>0$, and the denominator is strictly positive, this yields
    $\tau_{\ell+1} > \tau_\ell$ for all $\ell = 0,\dots, L-1$. Together with
    Proposition~\ref{prop:indifference}, this establishes that~\eqref{eq:thresh_closed} defines a
    valid $L$-bin equilibrium for every $L \geq 1$.

    Uniqueness follows from part~\ref{thm:partition:2}: the boundary
    conditions determine $C_1, C_2$ uniquely, so at most one $L$-bin partition
    satisfies the indifference conditions.

    \ref{thm:partition:3} Because $\sinh(\cdot)$ is an odd function, evaluating \eqref{eq:thresh_closed} at $L-\ell$ immediately yields $\tau_{L-\ell} = -\tau_\ell$.
\end{proof}

\subsection{Proof of Corollary~\ref{cor:binary_signaling}}
\label{app:proof_binary_signaling}

\begin{proof}
    At an interior threshold, sender indifference gives $\tau_\ell=({r_\ell+r_{\ell+1}})/{2\xi}$.
    Since all reconstruction levels lie in $[-1,1]$, $|\tau_\ell|\le1/\xi$. Because the thresholds are ordered, all bins except the two outermost ones lie within $[-1/\xi,1/\xi]$.
    If the prior has no atom at zero, the probability mass of the non-extreme bins vanishes as $\xi\to\infty$.

    For the uniform-prior bound, symmetry and monotonicity imply that all interior thresholds lie in $(-\varepsilon,\varepsilon)$ if and only if $\tau_{L-1}<\varepsilon$. From \eqref{eq:thresh_closed},
    \[
        \tau_{L-1}
        =
        e^{-\eta}
        \frac{1-e^{-\eta(L-2)}}{1-e^{-\eta L}}
        <
        e^{-\eta},
        \qquad L\ge3.
    \]
    Thus it suffices that $e^{-\eta}\le\varepsilon$, or $\eta\ge\ln(1/\varepsilon)$. Since $\cosh(\eta)=2\xi-1$, this condition is implied by
    \[
        2\xi-1
        \ge
        \cosh(\ln(1/\varepsilon))
        =
        \frac{1+\varepsilon^2}{2\varepsilon},
    \]
    which is equivalent to \eqref{eq:collapse_bound}.
\end{proof}

\bibliographystyle{IEEEtran}

\bibliography{biblio2}
\end{document}

%% file: packages.tex
\usepackage{amssymb,bm}
\usepackage{mathtools}
\usepackage{amsmath} 

\usepackage{amsthm}
\usepackage[noadjust]{cite}

\makeatletter
\@ifundefined{NAT@parse}{}{\let\NAT@parse\undefined}
\makeatother

\providecommand{\citep}[1]{\cite{#1}}
\providecommand{\citet}[1]{\cite{#1}}

\providecommand{\citeauthor}[1]{\cite{#1}}
\providecommand{\citeyear}[1]{\cite{#1}}

\usepackage{tabularx}
\let\labelindent\relax
\usepackage{enumitem}

\usepackage{tikz}
\usepackage{pgfplots}
\pgfplotsset{compat=1.18} 
\usepackage{tkz-graph}

\usetikzlibrary{
  arrows,
  arrows.meta,
  positioning,
  shapes,
  shapes.geometric,
  calc,
  fit,
  backgrounds,
  decorations.pathreplacing
}

\newtheorem{theorem}{Theorem}
\newtheorem{definition}{Definition}
\newtheorem{proposition}{Proposition}

\newtheorem{lemma}{Lemma}
\newtheorem{corollary}{Corollary}
\newtheorem{assumption}{Assumption}
\newtheorem{standassumption}[assumption]{Standing Assumption}
\newtheorem{remark}{Remark}

\usepackage[hidelinks]{hyperref}
\usepackage[capitalize]{cleveref}
\usepackage{tabularx}
\usepackage{multirow}
\usepackage{booktabs}
\usepackage[english]{babel}
\usepackage{subcaption}
\usepackage{graphicx}   